\newcommand{\minute}{\mathit{minute}}
\newcommand{\hour}{\mathit{hour}}
\newcommand{\ie}{\emph{i.e.},~}
\newcommand{\lmax}{\ell_{max}}
\newcommand{\sinit}{s_{\mathrm{init}}}
\newcommand{\outputs}{\pi_{\mathrm{out}}}
\newcommand{\inputs}{\pi_{\mathrm{in}}}
\newcommand{\etal}{\textit{et al.}\xspace}
\newenvironment{lemma-repeat}[1]{\begin{trivlist}
\item[\hspace{\labelsep}{\bf\noindent Lemma \ref{#1} }]\em }%
{\end{trivlist}}
\newenvironment{theorem-repeat}[1]{\begin{trivlist}
\item[\hspace{\labelsep}{\bf\noindent Theorem \ref{#1} }]\em }%
{\end{trivlist}}
\newcommand{\remove}[1]{}
\newcommand{\size}[1]{\ensuremath{\left|#1\right|}}
\newcommand{\set}[1]{\left\{ #1 \right\}}
\DeclareMathOperator{\polylog}{polylog}
\title{Smoothed Analysis of Population Protocols} 
\titlerunning{} 
\author{Gregory Schwartzman}{JAIST, Japan  }{greg@jaist.ac.jp}{}{}
\author{Yuichi Sudo}{Hosei University, Japan  }{sudo@hosei.ac.jp}{}{}
\authorrunning{Gregory Schwartzman and Yuichi Sudo} 
\keywords{Population protocols, Smoothed analysis, Leader election}
\newcommand{\myparagraph}[1]{\bigskip\noindent{\textbf{#1}}}
\begin{document}
\maketitle
\begin{abstract}
    In this work, we initiate the study of \emph{smoothed analysis} of population protocols. We consider a population protocol model where an adaptive adversary dictates the interactions between agents, but with probability $p$ every such interaction may change into an interaction between two agents chosen uniformly at random. That is, $p$-fraction of the interactions are random, while $(1-p)$-fraction are adversarial.
    The aim of our model is to bridge the gap between a uniformly random scheduler (which is too idealistic) and an adversarial scheduler (which is too strict).
    
    We focus on the fundamental problem of leader election in population protocols. We show that, for a population of size $n$, the leader election problem can be solved in $O(p^{-2}n \log^3 n)$ steps with high probability, using $O((\log^2 n) \cdot (\log (n/p)))$ states per agent, for \emph{all} values of $p\leq 1$. Although our result does not match the best known running time of $O(n \log n)$ for the uniformly random scheduler ($p=1$), we are able to present a \emph{smooth transition} between a running time of $O(n \polylog n)$ for $p=1$ and an infinite running time for the adversarial scheduler ($p=0$), where the problem cannot be solved.
    The key technical contribution of our work is a novel \emph{phase clock} algorithm for our model. This is a key primitive for much-studied fundamental population protocol algorithms (leader election, majority), and we believe it is of independent interest.
\end{abstract}

\section{Introduction}
In the traditional population protocol model \cite{AAD+06}, we have a population of $n$ agents, where every agents is a finite state machine with a small number of states. We refer to the cross product of all of the states of the agents in the population as the \emph{configuration} of the population.
Two agents can interact, whereupon their internal states may change as a \emph{deterministic} function of their current states. While the transition function is deterministic, it need not be symmetrical. That is, the interaction is ordered, where one agent is called an \emph{initiator} and the other is called a \emph{responder}. A standard assumption is that the order of agents upon an interaction is chosen uniformly at random. This is equivalent to having a single random bit that can be used by the transition function.

The sequence of interactions that the population undergoes is called a \emph{schedule}, and is decided by a \emph{scheduler}. The standard scheduler used in this model is the \emph{uniformly random} scheduler, which chooses all interactions uniformly at random. Agent states are mapped to outputs via a problem specific output function.
Generally, a protocol aims to take any legal initial configuration (legal input) and, after a sufficient number of interactions, turn it into one of a desired set of configurations (legal output). After the population reaches a legal output configuration, every following configuration is also a legal output. The running time of the protocol is an upper bound on the number of interactions (steps) required to map any legal input to a legal output. The model assumes agent interactions happen sequentially, but another common term is \emph{parallel time}, which is the running time of a protocol divided by $n$. A formal definition of population protocols is given in Section~\ref{sec: pre}.

In this paper, we focus on the \emph{leader election} problem. In this problem, every agent is initially marked as either a \emph{leader} or a \emph{follower}. We are guaranteed that initially there exists at least one leader in the population. The goal is to design a protocol, such that, after a sufficient number of interactions, the population always converges to a configuration with a unique leader. The leader election problem has received a large amount of attention in the population protocol literature \cite{AAD+06,AAF+08,AG15,AAE+17,AAG18,GS20,GSU19,SudoOIKM20,SNY12,Sud+20,CC19,CC20,SSN+21,YSM20,FJ06,BBB13,BDN+19,SIW12,SEI+20} and thus makes the perfect case study for our model.

\myparagraph{Motivation for our model} Population protocols aim to model the computational power of a population of many weak computational entities. Initially introduced to model animal populations \cite{AAD+06} (a flock of birds, each attached with a sensor), the model has found use in a wide range of fields. For example: wireless sensor networks \cite{PerronVV09,DraiefV12}, molecular computation (e.g. DNA computing) \cite{chen2013programmable,ChenCDS17,cardelli2012cell}.
The assumption of completely uniform interactions in these models is a reasonable \emph{approximation} to the true nature of the interactions. That is, a flock of birds does not interact uniformly at random, the interaction probability of molecules in a fluid can depend on their size and shape, and sensors in a sensor network may experience delays, malfunctions, or even adversarial attacks. The common thread among all of these scenarios is that, while the uniformity assumption is too strong, these systems still contain some amount of randomness. There is a rich literature on designing population protocols for the uniformly random scheduler, and it would be very disheartening if these results do not generalize if we slightly weaken the scheduler.
In this paper we try to model these environments, which are "somewhat noisy", and answer the question: Are current population protocol algorithms robust or fragile?

\myparagraph{Smoothed analysis} To this end we consider \emph{smoothed analysis} of population protocols. Smoothed analysis was first introduced by Spielman and Teng~\cite{SpielmanT09, SpielmanT04}, in an attempt to  explain the fact that some problems admit strong theoretical lower bounds, but in practice are solved on a daily basis.
The explanation smoothed analysis suggests for this gap, is that lower bounds are proved using very specific, pathological instances of a problem, which are highly unlikely to happen in practice. They support this idea by showing that some lower bound instances are extremely \emph{fragile}, i.e., a small random perturbation turns a hard instance into an easy one.
Spielman and Teng applied this idea to the simplex algorithm, and showed that, while requiring an exponential time in the \emph{worst case}, if we apply a small random noise to our instance before executing the simplex algorithm on it, the running time becomes polynomial in expectation. 

While in classical algorithm analysis worst-case analysis currently reigns supreme, the opposite is true regarding population protocols. The vast majority of population protocols assume the uniformly random scheduler. This is due to the fact that under the adversarial scheduler most problems of interest are pathological. This reliance on the uniformly random scheduler leads us to ask the following questions:
Is the assumption regarding a uniformly random scheduler too strong? Will the algorithms developed under this assumption fail in the real world? We use smoothed analysis to show that indeed it is possible to design \emph{robust} algorithms for the much-studied leader election problem in population protocols. That is, an algorithm that can provide convergence guarantees even if only a \emph{tiny} percentage of the interactions is random. In doing so we smoothly bridge the gap between the adversarial scheduler and the uniformly random scheduler. 

\myparagraph{Our model and results} It is easy to see that if \emph{all} of the interactions are chosen adversarially, no problem of interest can be solved. In this paper, we present a model which smoothly bridges the gap between the adversarial scheduler and the uniformly random scheduler. In our model, we have an adaptive adversary which chooses the $(i+1)$-th interaction for the population after the completion of the $i$-th interaction. This choice can be based on \emph{any} past information of the system (interactions, configurations, random bits flipped). With probability $1-p$ the next interaction is the one chosen by the adversary, and with probability $p$ it is an interaction between two agents chosen uniformly at random. We call $p$ the smoothing parameter. In our analysis we allow protocols to access randomization directly as in \cite{CP07,MST18,BGK20}. Specifically, we assume that each time two agents interact, they get one (unbiased) random bit. In the appendix, we show how to extend our results even if we only assume that the order of initiator-responder is random (and no random bit is given). This results in a slight slowdown in our convergence time by a $O(p^{-1}\log n)$ multiplicative factor. If we assume a random bit is flipped for every interaction, then the adversary cannot decide the outcome of the random bit (but may decide the initiator-responder order). While if we assume no random bit is flipped, then the adversary cannot decide the initiator-responder order, and it is taken to be random.

This model is meant to model an environment that is mostly adversarial, but contains a small amount of randomness. We consider the fundamental problem of leader election in this model and show that we can design a protocol for our model which uses $O((\log^2 n) \cdot (\log (n/p)))$ states per agent, and elects a unique leader in $O(p^{-2}n \log^3 n)$ steps with high probability. Although our result does not match the best known running time of $O(n \log n)$, using $O(\log \log n)$ states, for the uniformly random scheduler ($p=1$), we are able to present a \emph{smooth transition} between a running time of $O(n \polylog n)$, using $O(\polylog n)$ states, for $p=1$ and an infinite running time for the adversarial scheduler ($p=0$), where the problem cannot be solved, regardless the number of states.

Furthermore, this shows that \emph{any} amount of noise in the system is sufficient to guarantee that the leader election problem can be solved if we allow for a sufficient number of states per agent. We also note that because the number of states required is $O((\log^2 n) \cdot (\log (n/p)))$, even for an extremely minuscule amount of noise, $p=1/poly(n)$, the leader election problem can be solved by agents using $\polylog (n)$ states.
This is important because we would like our agents to be very simple computational units, so we would like to avoid agents with a super-polylogarithmic number of states.

The key building block in our leader election algorithm is the \emph{phase clock} primitive (see section~\ref{sec: phase clock} for a formal definition). This is a weak synchronization primitive, which is at the heart of many state of the art algorithms for fundamental problems like leader election and majority in population protocols \cite{AAE08, AAG18, GS20, GSU19, BGK20, SudoOIKM20, SEI+20,BEF+20}. The analysis for all current \emph{phase clock} implementations fails for any constant smoothing parameter $p<1$, assuming an $O(\mathrm{polylog}(n))$ number of states.
Roughly speaking, existing phase clocks break when the adversary chooses two agents and repeatedly forces them to interact (a detailed explanation is given in Section~\ref{sec: why existing break}).

We present a novel phase clock design that is robust even when all but a tiny fraction of the interactions are adversarial. Our phase clock relies heavily on the fact that the random bits flipped per interaction are not chosen in an adversarial fashion. To overcome the shortcomings of existing phase clock algorithms, we base our phase clock on a stochastic process whose correctness is \emph{indifferent} to adversarial interactions. 
Finally, we show that using our phase clock in a simplified (and slower) version of the leader election algorithm of \cite{SudoOIKM20} achieves the desired running time. 
Although we provide a complete (and simplified) proof of the leader election protocol with our phase clock for completeness, the original analysis \cite{SudoOIKM20} still goes through unchanged. That is, our phase clock is basically plug-and-play. Thus, we believe this primitive can be used directly for more complex population protocols such as the complete leader election algorithm of \cite{SudoOIKM20}, or the majority algorithm of \cite{AAG18,BEF+20}. However, properly presenting and analyzing these algorithms is beyond the scope of this paper, and we leave it for future work.

\subsection{Related Work}
Smoothed analysis was introduced by Spielman and Teng~\cite{SpielmanT09,SpielmanT04}. Since  then, it has received much attention in sequential algorithm design (see the survey in~\cite{SpielmanT09}). Recently, smoothed analysis has also received some attention in the distributed setting.
The first such application is due to Dinitz \etal~\cite{DFGN18}, who apply it to various well-studied problems in dynamic networks.
Since then, different smoothing models \cite{MeirPS20} and problems \cite{ChatterjeePP20,MollaS20} were considered. To the best of our knowledge, we are the first to consider smoothed analysis of population protocols.

Leader election has been extensively 
studied in the population protocol model. 
The problem was first considered in \cite{AAD+06}, where a simple protocol was presented. In this protocol, all agents are initially leaders, and we have only one transition rule: when two leaders meet, one of them becomes a follower (\ie a non-leader). This simple protocol uses only two states per agent
and elects a unique leader 
in $O(n^2)$ steps in expectation. 
This protocol is time-optimal:
Doty and Soloveichik \cite{DS18} showed that any constant space protocol requires $\Omega(n^2)$ expected steps to elect a unique leader. 
In a breakthrough result, Alistarh and Gelashvili \cite{AG15} 
designed a leader election protocol 
that converges in $O(n\log^3n)$ expected steps 
and uses $O(\log^3 n)$ states per agent. 
Thereafter, a number of papers have been devoted to 
fast leader election \cite{AAG18,GS20,GSU19,SudoOIKM20,BGK20}.
G{\k{a}}sieniec, Staehowiak, and Uznanski \cite{GSU19} 
gave an algorithm that converges in 
$O(n \log n \log \log n)$ expected steps 
and uses a surprisingly small number of states:
only $O(\log \log n)$ states per agent.
This is space-optimal because it is known that
every leader election protocol with a $O(n^2/\mathrm{polylog}(n))$ convergence time
requires $\Omega(\log \log n)$ states \cite{AAE+17}. 
Sudo \etal~\cite{SudoOIKM20} gave
a protocol that elects a unique leader within 
$O(n\log n)$ expected steps 
and uses $O(\log n)$ states per agent. 
This is time-optimal because
any leader election protocol 
requires $\Omega(n\log n)$ expected steps 
even if it uses an arbitrarily large number of states 
and the agents know the exact size of the population \cite{SM20}.
These two protocols were the state-of-the-art 
until recently, when Berenbrink \etal~\cite{BGK20}
gave a time and space optimal protocol.
In all of the above literature, the stabilization time (\ie the number of steps it takes to elect a unique leader)
is evaluated under the uniformly random scheduler.

Self-stabilizing leader election has also been well studied 
\cite{AAF+08,SNY12,Sud+20,CC19,CC20,SSN+21,YSM20,FJ06,BBB13,BDN+19,SIW12,SEI+20}. In the self-stabilizing setting, we do not assume that all agents are initialized at the beginning of an execution.
That is, we must guarantee that a single leader is elected eventually and maintained thereafter
even if the population begins an execution from an \emph{arbitrary} configuration. 
Typically, the population must create a new leader if there is no leader initially,
while the population must decrease the number of leaders to one 
if there are two or more leaders initially.
Unfortunately, the self-stabilizing leader election cannot be solved
in the standard model \cite{AAF+08}.
Thus, this problem has been considered 
(i) by assuming that the agents have global knowledge such as the exact number of agents
\cite{SIW12,BDN+19,SSN+21},
(ii) by assuming the existence of oracles \cite{FJ06,BBB13},
(iii) by slightly relaxing the requirement of self-stabilization \cite{SNY12,Sud+20,SEI+20},
or (iv) by assuming a specific topology of the population such as rings \cite{AAF+08,CC19,CC20,YSM20}.

Several papers on population protocols assume the globally fair scheduler \cite{AAD+06,AAF+08,FJ06,BBB13,CC19}. 
Intuitively, this scheduler cannot avoid a possible step forever. 
Formally, the scheduler guarantees that 
in an infinite execution, a configuration appears infinitely often
if it is reachable from a configuration that appears infinitely often.
Assuming the fairness condition is very helpful in designing protocols that solve some problem \emph{eventually}, however, it is not helpful in bounding the stabilization time. 
Thus, the uniformly random scheduler is often assumed to evaluate the time complexities of protocols, as mentioned above. Actually, the uniformly random scheduler is a special case of the globally fair scheduler.

Several papers considered population protocols with some form of noise. In \cite{XuBBKN20} a random scheduler with non-uniform interaction probabilities is proposed, and the problem of data collection is analyzed for this model. While their model generalizes the standard random scheduler, it still does not allow adversarial interactions, and thus is quite different from our model. Sadano \etal \cite{SSK+19} introduced and considered a stronger model than the original population protocols under the uniformly random scheduler. In their model, agents can control their moving speeds. Faster agents have a higher probability to be selected by the scheduler at each step. They show that some protocols have a much smaller stabilization time by changing the speeds of agents. 

Similarly to us, the authors of \cite{BeauquierBBG15} also try to answer the question of whether population protocols can function under imperfect randomness. They take a very general approach, which is somewhat different than ours. 
The main difference is that the randomness of a schedule (a sequence of interactions) is measured as its \emph{Kolmogorov complexity} (the size of the shortest Turing machine which outputs the schedule). Intuitively, the schedule is random if its Kolmogorov complexity is (almost) equal to the length of the schedule. They parameterize the "randomness" of the schedule by a parameter $T$, where $T=1$ means that the schedule is completely random, while the randomness decreases as the value of $T$ goes to 0. An \emph{adversary} with parameter $T$ is a scheduler that only generates schedules with parameter $T$.

They show that any problem which can be solved for $T=1$ can also be solved for $T<1$ (imperfect randomness). They also consider the leader election problem, and give upper and lower bounds for the value of $T$ required to solve the problem. Their bounds are not explicit, but are presented as a function of the largest root of a certain polynomial. Apart from a different notion of randomness, their work differs from ours in that it assumes an oblivious adversary (while we consider an adaptive adversary). This makes a direct comparison between our results and those of \cite{BeauquierBBG15} somewhat tricky. It might be said that we take a somewhat more pragmatic approach, showing a very natural augmentation to the popular random scheduler. This allows us to \emph{explicitly} express the running time and number of states for all values of $p$ (showing that for reasonable values, the performance is very close to that of the random scheduler), while in \cite{BeauquierBBG15} only existence results are presented.

\section{Preliminaries}
\label{sec: pre}

\subsection{Population Protocols}

A \emph{population} is
a network consisting of {\em agents}.
We denote the set of all agents by $V$ and let $n = |V|$.
We assume that a population is a complete graph,
thus every pair of agents $(u,v)$ can interact,
where $u$ serves as the \emph{initiator}
and $v$ serves as the \emph{responder} of the interaction.
Throughout this paper, we use the phrase
``with high probability'' to denote a
probability of $1-O(n^{-\alpha})$ for an arbitrarily large constant $\alpha$.

A \emph{protocol} $P(Q,T,X,Y,\inputs,\outputs)$ consists of 
a finite set $Q$ of states,
a transition function
$T:  Q \times Q \times \{0,1\} \to Q \times Q$,
a finite set $X$ of input symbols,
a finite set $Y$ of output symbols,
an input function $\inputs: X \to Q$,
and an output function $\outputs : Q \to Y$.
The agents are given (possibly different) inputs $x \in X$.
The input function $\inputs$ determines their initial states $\inputs(x)$.
When two agents interact,
$T$ determines their next states
according to their current states and one bit.
The \emph{output} of an agent is determined by $\outputs$:
the output of an agent in state $q$ is $\outputs(q)$.

A \emph{configuration} is a mapping $C : V \to Q$ that specifies
the states of all the agents.
We say that a configuration $C$ changes to $C'$ by the interaction
$e = (u,v)$
and a bit $b$,
denoted by $C \stackrel{(e,b)}{\to} C'$,
if
$(C'(u),C'(v))=T(C(u),C(v),b)$
and $C'(w) = C(w)$
for all $w \in V \setminus \{u,v\}$.

Thus, given a configuration $C$, a sequence of interactions (or ordered pairs of agents) $\{\gamma_i\}_{i=0}^\infty$,
and a sequence of bits $\{b_i\}_{i=0}^\infty$,
the \emph{execution} starting from $C$ under $\{\gamma_i\}_{i=0}^\infty$ and $\{b_i\}_{i=0}^\infty$
is defined as the sequence of configurations $\{C_i\}_{i=0}^\infty$ such that $C_i \stackrel{(\gamma_i,b_i)}{\to} C_{i+1}$.
A sequence of interactions is called a \emph{scheule} and will be explained in detailed in the next subsection.
We assume that each $b_i \in \{0,1\}$ is a random variable such that $\Pr[b_i = 1] = 1/2$
and these random bits $b_0,b_1,\dots$ are independent of each other.
That is, upon each interaction the two agents have access to a bit of randomness to decide their new states.
In the appendix, we show how to extend our results for the more standard population protocol model where only the order of initiator-responder is random, and no additional randomness is available.


\subsection{Schedulers}
A schedule $\gamma=\set{\gamma_i}_{i=0}^{\infty}=\set{(u_i,v_i)}_{i=0}^{\infty}$ is a sequence of ordered pairs which determines the interactions the population of agents undergoes. Note that although $\gamma$ is ordered, we use a set notation for simplicity. The schedule is determined by a \emph{scheduler}. In the classical population protocol model, a uniformly random scheduler is used. That is, every pair in $\gamma$ is chosen uniformly at random. Let us denote this scheduler by $\Gamma_{u}$. One can also consider an \emph{adversarial} scheduler. Such a scheduler creates $\gamma$ in an adversarial fashion. Let us denote this scheduler by $\Gamma_a$. We would like to note that while the sequence of interactions is chosen adversarially by $\Gamma_a$, it does not determine the coin flips observed by the agents.
This type of scheduler can either be \emph{adaptive} or \emph{oblivious} (non-adaptive). In both cases the adversary has complete knowledge of the initial state of the population and the algorithm executed by the agents. However, for the oblivious case the sequence of interactions, $\gamma$, must be chosen \emph{before} the execution of the protocol, while for the adaptive case the interaction $\gamma_i$ is chosen by the adversary after the execution of the $(i-1)$-th step, with full knowledge of the current state of the population. The difference between an oblivious and an adaptive adversary can also be stated in term of knowledge of the randomness in the population. An adaptive adversary has full knowledge regarding the population, including the random coins used in the past. While the oblivious adversary does not have access to the randomness of the system.  

\myparagraph{Our model}
We consider a smoothed scheduler $\Gamma_s$ which is a combination of $\Gamma_u$ and $\Gamma_a$. Specifically, let $\gamma^a=\set{\gamma^a_i}_{i=0}^{\infty},\gamma^u=\set{\gamma^u_i}_{i=0}^{\infty}$ be the schedules chosen by $\Gamma_a,\Gamma_u$. We define the smoothed schedule $\gamma_s=\set{\gamma^s_i}_{i=0}^{\infty}$ of $\Gamma_s$ as $\gamma^s_i = \gamma^u_i$ with probability $p$ and $\gamma^s_i = \gamma^a_i$ with probability $1-p$, where $p\in [0,1]$ is the \emph{smoothing parameter}. We note that if $\Gamma_a$ is adaptive, then so is $\Gamma_s$. For the rest of the paper we focus on an \emph{adaptive} adversary.

In this paper, we assume that a rough knowledge
of an upper bound of $n, p^{-1}$ is available.
Specifically, we assume that all agents know two common values $n',p'$, such that $n\leq n' = O(n), p \geq p' = \Omega(p)$. Assuming such a rough knowledge about $n$ is standard in the recent population protocol literature \cite{AG15, AAE+17, AAG18, GSU19, MST18, BCER17, BGK20, SNY12, SudoOIKM20, Sud+20, SEI+20, BEF+20}, and we generalize this assumption for $p$. Due to the asymptotic equivalence between $n, p$ and $n', p'$, we only use only $n,p$ in the definition and analysis of our algorithm.

\subsection{Leader election}
The leader election problem requires that 
every agent should output $L$ or $F$ (``leader'' or ``follower'') respectively.
We say that a configuration $C$ of $P$ is output-stable if 
no agent may change its output in an execution of $P$ that starts from $C$,
regardless of the choice of interactions.
Let $\mathcal{S}_P$ be the set of the output-stable configurations
such that, for any configuration $C \in \mathcal{S}_P$,
exactly one agent outputs $L$ (\ie is a leader) in $C$.
This problem does not require inputs for the agents.
Hence, we assume $X = \{x\}$,
thus all the agents begin an execution with a common state $\sinit = \inputs(x) $.
We say that a protocol $P$
is a leader election protocol for a scheduler $\Gamma$, if the execution of the protocol starting from
the configuration where all agents are in state $\sinit$
reaches a configuration in $\mathcal{S}_P$ with probability $1$ with respect to the scheduler $\Gamma$.
We define the stabilization time of the execution as the number of steps until it reaches a configuration in $\mathcal{S}_P$ for the first time.
\subsection{One-way Epidemic}

In the proposed protocol,
we often use the \emph{one-way epidemic} protocol\cite{AAE08}.
This is a population protocol where every agent has two states $\set{0,1}$ and the transition function is given as $(x,y) \rightarrow (x, \max \set{x,y})$. 
All nodes with value 1 are \emph{infected}, while all nodes with value 0 are \emph{susceptible}. Initially, we assume that a single node is infected. We say that the one-way epidemic finishes when all nodes are infected. This is an important primitive for spreading a piece of information among the population.

Angluin et al.~\cite{AAE08} prove that
one-way epidemic finishes within $\Theta(n \log n)$ interactions with high probability for $\Gamma_u$. It is easy to see that the one-way epidemic protocol finishes within $O(p^{-1}n \log n)$ steps for $\Gamma_s$ with high probability. This is because within $O(p^{-1}n \log n)$ steps of $\Gamma_s$ there must exist $\Omega(n \log n)$ \emph{random} interactions with high probability. Due to the nature of the one-way epidemic protocol, we can just ignore all adversarial interactions, and the original analysis goes through.

\subsection{Martingale concentration bounds}
In our analysis we often encounter the following scenario: We have a series of \emph{dependent} binary random variables $\set{X_i}_{i\geq0}$ such that $\forall i, E[X_i \mid X_0,...,X_{i-1}] \geq q$, for some constant $q$.\footnote{Note that this condition is equivalent to $\forall i, Pr[X_i=1 \mid X_0=s_0,...,X_{i-1}=s_{i-1}] \geq q$ for any binary string $s$ of length $i-1$.} 
And we would like to bound the probability $Pr[\sum_{i=0}^{\lceil \alpha q^{-1}t\rceil}X
_i \leq  t]$ for some constant $\alpha$. Note that if the variables were independent, we could have simply used a Chernoff type bound. As this is not the case, we use martingales for our analysis.

 We say that a sequence of random variables, $\set{Y_i}_{i=0}$, is a sub-martingale with respect to another sequence of random variables $\set{X_i}_{i\geq0}$ if it holds that $\forall i, E[Y_i \mid X_0,...,X_{i-1}] \geq Y_{i-1}$. The following concentration equality holds for sub-martingales:

\begin{theorem}[Azuma]
Suppose that $\set{Y_i}_{i\geq0}$ is a sub-martingale with respect to $\set{X_i}_{i\geq0}$, and that $|Y_i - Y_{i-1}|\leq c_i$. Then for all positive integers $k$ and positive reals $\epsilon$ it holds that:
$$Pr[Y_k - Y_0 < -\epsilon] \leq e^{\frac{-\epsilon^2}{2\sum^k_{j=1} c_j}}$$

\end{theorem}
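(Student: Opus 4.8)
The plan is to prove this by the exponential-moment (Chernoff) method, adapted to the dependent setting through the sub-martingale structure. Write $D_i = Y_i - Y_{i-1}$ for the increments and let $\mathcal{F}_{i-1}$ denote the information in $X_0,\dots,X_{i-1}$. The sub-martingale hypothesis says exactly that $E[D_i \mid \mathcal{F}_{i-1}] \ge 0$, and the bounded-difference hypothesis says $|D_i| \le c_i$. The conceptual obstacle is that the $D_i$ are \emph{dependent}, so an ordinary Chernoff bound (which requires independence) does not apply; the role of the martingale machinery is precisely to replace independence by the tower property.

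First I would apply Markov's inequality to the nonnegative variable $e^{-\lambda(Y_k - Y_0)}$ for a parameter $\lambda > 0$ chosen later: since $\set{Y_k - Y_0 < -\epsilon} = \set{e^{-\lambda(Y_k-Y_0)} > e^{\lambda\epsilon}}$, we get $Pr[Y_k - Y_0 < -\epsilon] \le e^{-\lambda\epsilon}\, E[e^{-\lambda(Y_k - Y_0)}]$. Next I would peel the product $e^{-\lambda(Y_k-Y_0)} = \prod_{i=1}^k e^{-\lambda D_i}$ one factor at a time: conditioning on $\mathcal{F}_{k-1}$ and using the tower property gives $E[e^{-\lambda(Y_k - Y_0)}] = E\big[e^{-\lambda(Y_{k-1}-Y_0)}\, E[e^{-\lambda D_k}\mid \mathcal{F}_{k-1}]\big]$, reducing everything to bounding the inner conditional moment-generating factor. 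This is where Hoeffding's lemma enters: for $D$ with $|D|\le c$, hence lying in an interval of length $2c$, one has $E[e^{-\lambda D}\mid \mathcal{F}] \le \exp(-\lambda E[D\mid\mathcal{F}] + \lambda^2 c^2/2)$, and since $E[D_k\mid\mathcal{F}_{k-1}] \ge 0$ and $\lambda>0$ the linear term is $\le 0$, leaving $E[e^{-\lambda D_k}\mid\mathcal{F}_{k-1}] \le e^{\lambda^2 c_k^2/2}$. Iterating over $i = k, k-1, \dots, 1$ yields $E[e^{-\lambda(Y_k-Y_0)}] \le \exp\!\big(\tfrac{\lambda^2}{2}\sum_{j=1}^k c_j^2\big)$, so $Pr[Y_k - Y_0 < -\epsilon] \le \exp\!\big(-\lambda\epsilon + \tfrac{\lambda^2}{2}\sum_{j=1}^k c_j^2\big)$, and optimizing with $\lambda = \epsilon/\sum_{j=1}^k c_j^2$ gives $\exp\!\big(-\epsilon^2/(2\sum_{j=1}^k c_j^2)\big)$.

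The one point genuinely worth care — and the main obstacle to writing the conclusion \emph{exactly} as displayed — is reconciling this with the denominator $2\sum_{j=1}^k c_j$. The exponential method intrinsically produces the squared sum $\sum_j c_j^2$ (the Gaussian-type tail is governed by the per-step variance proxies $c_j^2$), so the displayed inequality is the Azuma--Hoeffding bound read with $c_j$ in the place of $c_j^2$. These agree precisely in the regime in which the theorem is used throughout this paper: each increment $D_i$ is a centred binary quantity (an indicator minus its conditional mean, as in the setup preceding the theorem), for which one takes $c_i = 1$, whence $c_i^2 = c_i$ and $\sum_j c_j^2 = \sum_j c_j = k$. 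I would therefore fix $c_i = 1$ as the operative normalization, under which the bound derived above \emph{is} the displayed one; I would also flag explicitly that for general increment bounds the denominator must be read as $\sum_j c_j^2$, since the first-power form is valid only when every $c_j \le 1$ and is not implied by the stated hypotheses otherwise.
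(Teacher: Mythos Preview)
The paper does not prove this statement at all: Azuma's inequality is quoted as a classical concentration result and then immediately applied, so there is no ``paper's own proof'' to compare against. Your argument is the standard exponential-moment proof of Azuma--Hoeffding and is correct as written; moreover, you correctly spotted that the displayed denominator $2\sum_{j=1}^k c_j$ should read $2\sum_{j=1}^k c_j^2$ in general, and that the paper only ever invokes the bound with $c_j=1$, where the two coincide. Nothing further is needed.
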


Let us consider the sequence $\set{X_i}_{i\geq0}$ from before. Recall that $\forall i, E[X_i \mid X_0,...,X_{i-1}] \geq q$.
Without loss of generality assume that $X_0=0$.

Let us define $Y_i = \sum_{j=0}^i X_j - q\cdot i$. Note that $Y_0 = 0$. Let us show that $\set{Y_i}_{i\geq0}$ is a a submartingale with respect to $\set{X_i}_{i\geq0}$. It holds that:
\begin{align*}
    &E[Y_{i+1} \mid X_0,...,X_i] = E[X_{i+1}-q + Y_{i} \mid X_0,...,X_i]
    \\&=E[X_{i+1} \mid X_0,...,X_i] - q + Y_{i} > Y_{i}
\end{align*}
Where in the transitions we used the fact that the variables $X_0,..,X_i$ completely determine $Y_i$, thus $E[Y_{i} \mid X_0,...,X_i]=Y_i$, and the fact that $E[X_i \mid X_0,...,X_{i-1}] \geq q$.
Next we apply Azuma's inequality for $Y_{\lceil 2 q^{-1} t \rceil}$ by setting $\epsilon= t$, noting that $\forall i, |Y_{i+1}-Y_i| \leq 1$.
\begin{align*}
\Pr\left[\sum^{\lceil 2 q^{-1} t \rceil}_{i=0}X_i < t\right] 
\leq \Pr\left[\sum^{\lceil 2 q^{-1} t \rceil}_{i=0}X_i -2t \leq  - t\right]
\le \Pr[Y_{\lceil 2 q^{-1} t \rceil} < -t]
\le e^{-t^2/\lceil 2q^{-1} t \rceil} = e^{-\Theta(t)}.
\end{align*}

We state the following theorem:

\begin{theorem}
\label{thm: modified chernoff}
Let $\set{X_i}_{i\geq0}$ be a series of binary random variables such that $\forall i, E[X_i \mid X_0,...,X_{i-1}] \geq q$ for some constant $q$. Then it holds that for every positive integer $t$:
\begin{align*}
\Pr\left[\sum^{\lceil 2 q^{-1} t \rceil}_{i=0}X_i < t\right] 
\le  e^{-\Theta(t)}
\end{align*}
\end{theorem}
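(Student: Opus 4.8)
The plan is to reduce the claim to Azuma's inequality by building the same submartingale sketched in the paragraph preceding the statement. Set $k = \lceil 2 q^{-1} t \rceil$ and, for $0 \le i \le k$, define $Y_i = \sum_{j=0}^i X_j - q\cdot i$. Subtracting the linear drift $q\cdot i$ is what converts the hypothesis $E[X_i \mid X_0,\dots,X_{i-1}] \ge q$ into a submartingale property: since $X_0,\dots,X_i$ determine $Y_i$,
$$E[Y_{i+1} \mid X_0,\dots,X_i] = E[X_{i+1} \mid X_0,\dots,X_i] - q + Y_i \ge Y_i,$$
so $\{Y_i\}$ is a submartingale with respect to $\{X_i\}$.

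Next I would record the bounded-difference condition Azuma requires: because $X_{i+1} \in \{0,1\}$ and $q \in [0,1]$ (it lower-bounds a probability), $|Y_{i+1} - Y_i| = |X_{i+1} - q| \le 1$, so we may take $c_j = 1$ for all $j$. Applying the Azuma bound to $Y_k$ with $\epsilon = t$ gives $\Pr[Y_k - Y_0 < -t] \le e^{-t^2/(2k)}$. To connect this to the quantity of interest, note $Y_0 = X_0 \ge 0$ and $qk \ge q\cdot 2q^{-1}t = 2t$, so the event $\sum_{i=0}^k X_i < t$ forces $Y_k = \sum_{i=0}^k X_i - qk < t - 2t = -t$, hence also $Y_k - Y_0 < -t$. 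Therefore $\Pr[\sum_{i=0}^k X_i < t] \le e^{-t^2/(2k)}$, and since $k \le 2q^{-1}t + 1 \le 3q^{-1}t$ (using $t \ge 1$ and $q \le 1$), the exponent is $-t^2/(2k) \le -qt/6 = -\Theta(t)$ when $q$ is treated as a constant, which is exactly the claimed bound.

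There is no real obstacle here; the statement is just a packaging of standard martingale concentration. The only points needing a little care are (i) confirming that the conditional-expectation hypothesis yields the submartingale inequality in the right direction (so that Azuma, rather than its supermartingale counterpart, applies), and (ii) index bookkeeping: $Y_0 = X_0$ may be nonzero, but this is harmless since it only helps the inclusion $\{Y_k < -t\} \subseteq \{Y_k - Y_0 < -t\}$, and the ceiling in $k = \lceil 2q^{-1}t\rceil$ only adds a lower-order additive term to the denominator. Alternatively one may assume $X_0 = 0$ without loss of generality, as in the discussion above, by prepending a dummy variable.
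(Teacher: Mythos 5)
Your proof is correct and follows essentially the same route as the paper: construct the compensated sum $Y_i = \sum_{j=0}^i X_j - q\cdot i$, verify it is a submartingale with bounded differences, and apply Azuma's inequality with $\epsilon = t$. The only difference is cosmetic bookkeeping (handling $Y_0 = X_0 \ge 0$ explicitly rather than assuming $X_0 = 0$ without loss of generality), which does not change the argument.
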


Specifically, when we set $t=\Theta(\log n)$ with a sufficiently large constant we get a high probability bound.


\section{Phase clock implementation for $\Gamma_s$}
\label{sec: phase clock}
A phase clock is a weak synchronization primitive used in population protocols. In a phase clock we would like all of the agents to have a variable, let's call it \emph{hour}, with the following properties:
\begin{enumerate}
    \item All agents simultaneously spend $\Omega(f(n))$ steps in the same hour.
    \item For every agent an hour lasts $O(g(n))$,    
\end{enumerate}
Where the above holds with high probability for every value of hour.
Ideally, we desire $f(n) = g(n)$. 

We borrow some notation from \cite{berenbrink2020time}, and define the above more formally. A \emph{round} is a period of time during which all agents have the same $hour$ value. Denote by $R_s(i), R_e(i)$ the start and end of round $i$. Formally, $R_s(i)$ is the interaction at which the last agent reaches hour $i$, while $R_e(i)$ is the interaction during which the first agent reaches hour $i+1$. We define the length of round $i$ as $L(i)=\max \set{0, R_e(i) - R_s(i)}$. Note that it may be the case that $R_e(i) \leq R_s(i)$, thus a $\max$ is needed in the definition. Finally, during these $L(i)$ interactions, all agents have $hour=i$. Next we define the stretch of round $i$ as $S(i)=R_e(i) - R_e(i-1)$. This is the amount of time since $hour=i$ is reached for the first time until $hour=i+1$ is reached for the first time. Note that $L(i) \leq S(i)$ always holds. For a visual representation, we refer the reader to Figure~\ref{fig: rounds}.
\begin{figure}
\centering
  \includegraphics[scale=0.3]{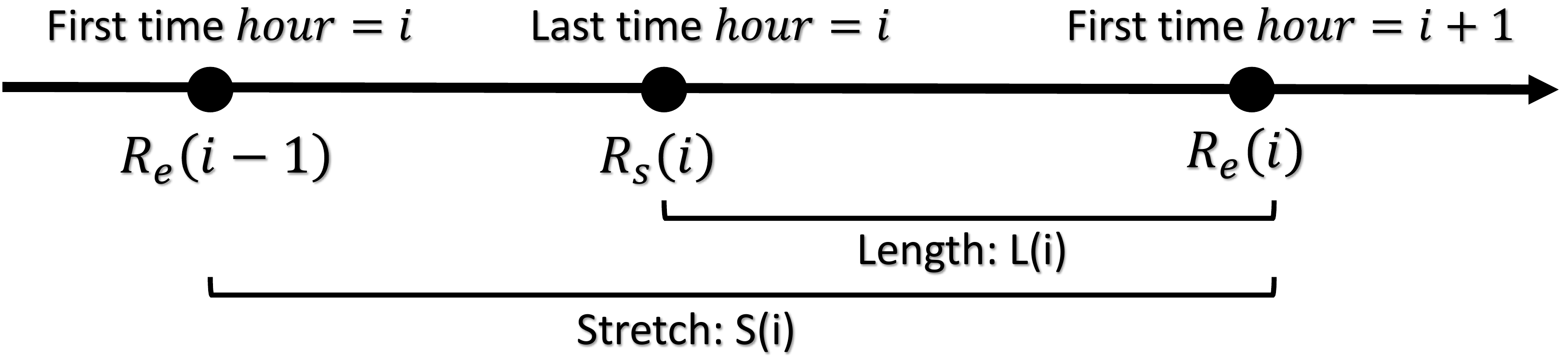}
  \caption{A visual representation of the length and stretch of a round.}
  \label{fig: rounds}
\end{figure}
Using the above notation, we define a phase clock.

\begin{definition}
\label{thm: phase clock def}
We say that an algorithm is  
a phase clock with parameters $f(n)$ and $g(n)$ (or a $(f(n),g(n))$-phase clock)
if it has the following guarantees with high probability for any $i \ge 0$:
\begin{enumerate}
    \item $L(i) \geq d_1 f(n)$
    \item $S(i) \leq d_2 g(n)$
\end{enumerate}
Where $d_1$ and $d_2$ are adjustable constants (taken to be sufficiently large). 
When $f(n) = g(n)$ we simply write a $f(n)$-phase clock.
\end{definition}
We note that all current phase clock algorithms require the uniformly random scheduler, and do not extend to $\Gamma_s$, as we will see in the next subsection.

\subsection{Why existing algorithms fail}
\label{sec: why existing break}
In this subsection, we show why the existing phase clock algorithms fail in our model.

There are three kinds of phase clock algorithms in the field of population protocols:
a phase clock with a unique leader \cite{AAE08},
a phase clock with a junta \cite{GS20,GSU19,BGK20,BEF+20},
and a leaderless phase clock \cite{AAG18,SudoOIKM20}. 
The first kind is essentially a special case of the second kind. 
The second kind is a $\log n$-phase clock that uses only a constant number of states.
However, we require the assumption that there is a set $J \subset V$ of agents marked as members of a \emph{junta}, such that $|J| = O(n^{1-\epsilon})$, where $\epsilon$ is a constant. 
The third kind is a $\log n$-phase clock
that uses $O(\log n)$ states but does not require the existence of a junta. 

In our notation, the second algorithm can be written as follows:
\begin{itemize}
    \item Each agent has a variable $\minute \in \mathbb{N}$. 
    \item Each agent outputs $hour = \lfloor minute / M \rfloor$, where $M$ is a (sufficiently large) constant. 
    \item Suppose that an initiator $u$ and a responder $v$ interact.
    The initiator $u$ sets its $minute$ to \sloppy{$\max(u.minute, v.minute+1)$} if $u$ is in the junta;
    otherwise to $\max(u.minute, v.minute)$.
\end{itemize}
By the definition of the algorithm, 
only an agent in the junta can increase $\max_{v \in V} v.minute$. 
This fact and the sublinear size of the junta guarantees that
the length of each round is $\Omega(n \log n)$ with high probability.
However, this guarantee depends on the uniformly random nature of the scheduler.
In our model, every interaction is chosen adversarially with probability $1-p$. 
The adversary can force two agents in the junta, say $u$ and $v$, to interact so frequently that
every round finishes within $O(1/(1-p))$ steps.
Thus, unless $p=1-O(1/n)$, \ie unless the adversary can only choose an extremely small fraction of the interactions, the adversary can always force each round to finish in $o(n)$ steps, \ie $o(1)$ parallel time. In particular, if $p=1-\Omega(1)$,
the adversary can always force each round to finish in a constant number of steps.

The third algorithm (the leaderless phase clock) 
can be written as follows\footnotemark:
\footnotetext{
The implementation of this phase clock slightly differs between \cite{AAG18} and \cite{SudoOIKM20}. Here we describe the implementation presented in \cite{SudoOIKM20}.
}
\begin{itemize}
    \item Each agent has variables $hour \in \mathbb{N}$
    and $minute \in \{0,1,\dots,M\}$, where $M = \Theta(\log n)$ with a sufficiently large hidden constant.
    \item Suppose that an initiator $u$ and a responder $v$ interact.
    The initiator $u$ updates its $hour$ and $minute$ as follows:
    $$(u.hour, u.minute) \gets \begin{cases}
        (v.hour, 0) & \text{if}\ u.hour < v.hour\\
        (u.hour + 1, 0)& \text{else if}\ u.minute = M\\
        (u.hour,u.minute+1) & \text{otherwise}.
    \end{cases}$$
\end{itemize}
In this phase clock, 
an agent resets its $minute$ to zero
each time it increases its $hour$.
Once an agent resets its $minute$ to zero,
it must have no less than $M$ interactions,
or interact with an agent whose $hour$ is larger than its $hour$,
before it increases its $hour$.
Thus, one can easily observe that 
the length of each round is $\Omega(n \log n)$
under the uniformly random scheduler.
However, in our model, this does not hold.
The adversary can pick two agents and force them to interact frequently, so that every round finishes within $O((\log n)/(1-p))$ steps, \ie $O((\log n)/(n(1-p)))$ parallel time.

\subsection{Our algorithm}
We present a $((np^{-1} \log^2 n), (np^{-2} \log^2 n))$-phase clock using $O((\log n) \cdot \log (n/p))$ states per agent,
where $p$ is the smoothing parameter for $\Gamma_s$. 

In our algorithm each agent $v$ has three states: $second,minute,hour$. Where $hour$ is the output variable. The domain of the variables is: $second \in \set{0,...,S}, minute \in \set{0,...,M}$ where $S = \log (n/p) + \log \log n + c, c=O(1)$ and $M=\Theta(\log n)$. For simplicity of notation and without loss of generality, we assume that $1/p, \log n, M, S, c$ are all integers.
While the domain of $hour$ is unbounded in our algorithm, it can easily be taken to be bounded (By using a simple modulo operation \cite{berenbrink2020time}, or by stopping the counter once it reaches some upper limit \cite{GS20,GSU19}). 
All variables are initialized to 0. For each interaction, we apply Algorithm~\ref{alg: phase clock}, where $u$ is the initiator and $v$ is the responder. Roughly speaking, the $second$ variable follows the following random walk pattern:
\begin{align*}
        second \gets 
\begin{cases}
    second+1,& \text{with probability } 1/2\\
    0,              & \text{with probability } 1/2
\end{cases}
\end{align*}
When it reaches $S$, the $minute$ variable is incremented, and $second$ is reset back to 0. When $minute$ reaches $M$, the $hour$ variable is incremented and both other variables are reset to 0. Finally, the $hour$ and $minute$ variables are spread via the one-way epidemic process. By doing so, every agent learns the maximum $hour$ value in the system, and the maximum $minute$ value for its current $hour$.

The main innovation in our algorithm is the increment pattern that the $second$ variable undergoes. 
Our increment pattern guarantees that the $second$ variable is robust to adversarial interactions. What dictates the speed of the increment is the \emph{total number} of interactions in the system.

\RestyleAlgo{boxruled}
\LinesNumbered
\begin{algorithm}[htbp]
	\DontPrintSemicolon
	\caption{Phase clock}
	\label{alg: phase clock}
	$M\gets\Theta(\log n), S\gets\log (n/p) + \log \log n + O(1)$\\
	$\forall v\in V, v.second \gets0, v.minute \gets 0, v.hour \gets 0$\\
	
	\ForEach{interaction $(u,v)$}
    {
        $u$ makes a fair coin flip\\
        \If{Heads}{
            $u.second\gets u.second+1$\\
        }
        \Else{$u.second \gets 0$}
        \If{$u.second=S$}{
            $u.minute\gets u.minute+1$\\
            $u.second \gets 0$\\
        }
        \If{$u.minute=M$}{
            $u.hour\gets u.hour+1$\\
            $u.minute \gets 0$\\
        }
        //One-way epidemic \\
        \If{$u.hour < v.hour$}
        {
            $u.hour \gets v.hour$\\ 
            $u.minute \gets 0$\\
            $u.second \gets 0$\\
        }
        \If{$u.hour = v.hour$ and $u.minute < v.minute$}
        {
            $u.minute \gets v.minute$ \\
            $u.second \gets 0$\\
        }
    }
	
\end{algorithm}

In the following section, we show that indeed our algorithm is a phase clock with round length $\Theta(n p^{-1}\log^2 n)$ and stretch of $\Theta(n p^{-2}\log^2 n)$.

\subsection{Analysis}
\myparagraph{Lower bounding $L(i)$}
Our first goal is to show that $L(i)\geq\Omega(np^{-1}\log^2 n)$. In order to achieve this, it enough to show that $S(i)\geq\Omega(np^{-1}\log^2 n)$. This is due to the fact at as soon as a new maximum value for $hour$ appears in the population, it is spread to all agents via the one-way epidemic process within $O(n p^{-1} \log n)$ steps. Let us formalize this claim.
Assume that $S(i)=R_e(i)-R_e(i-1)\geq \Omega(n p^{-1} \log^2 n)$. On the other hand, due to the one-way epidemic it holds that $R_s(i) - R_e(i-1) \leq O(p^{-1} n \log n)$. Combining these two facts we get that:
$$L(i)=R_e(i) - R_s(i) \geq \Omega(n p^{-1}\log^2 n) +R_e(i-1) - R_s(i) \geq \Omega(n p^{-1} \log^2 n) - O(p^{-1}n \log n) =  \Omega(n p^{-1} \log^2 n)$$
Thus, for the rest of this section we focus on lower bounding $S(i)$.

As we aim to bound $S(i)=R_e(i)-R_e(i-1)$, for every $i$, let us assume for the rest of the analysis that $R_e(i-1)=0$. That is we assume that time 0 is when $v.hour=i$ holds for some agent for the first time.
Let $m' = \max_{v \in V, v.hour=i} v.minute$ and let $T_k$ be a random variable such that $m'=k$ holds in the $T_k$-th step for the first time. Note that $T_0 = 0$ holds with probability 1. We prove the following lemma:
\begin{lemma}
\label{lem: ub min}
For $c>2$, it holds that $Pr(T_{k+1} - T_k > c np^{-1} \log n) > 1/2$ for any $k=0,1,...,M-1$.
\end{lemma}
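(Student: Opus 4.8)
Let $N = c\,np^{-1}\log n$, and let $E$ be the event that some agent flips Heads on $S$ consecutive interactions in which it is the initiator, all of these interactions occurring at steps in $(T_k,\,T_k+N]$. I will show $\{T_{k+1}-T_k\le N\}\subseteq E$ and then that $\Pr[E]<1/2$; together these give the lemma. Fix $i$ and $k\le M-1$, and recall time is normalized so that $R_e(i-1)=0$, hence $T_0=0$; note that the maximum $hour$ in the population equals $i$ and $m'\le k<M$ throughout $[0,T_{k+1})$, since $hour=i+1$ cannot appear before some hour-$i$ agent reaches $minute=M$.

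\textbf{Containment.} Before step $T_k$ every agent $v$ with $v.hour=i$ has $v.minute<k$, so an agent that ever reaches $v.minute=k$ first does so at a step $\tau_v\ge T_k$, at which its $second$ is reset to $0$ -- whether $minute=k$ is reached by incrementing from $k-1$, by copying $minute=k$ from another hour-$i$ agent, or by copying $hour=i$. Once $v$ has $v.hour=i$ and $v.minute=k$, the one-way-epidemic lines of Algorithm~\ref{alg: phase clock} never touch $v.second$ (no agent has a larger $hour$, and no hour-$i$ agent has $minute>k$ during $[T_k,T_{k+1})$), so $v.second$ changes only on interactions where $v$ is the initiator: $+1$ on Heads, reset to $0$ on Tails. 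Hence $v.minute$ can become $k+1$ within the window only if, after step $\tau_v\ge T_k$, agent $v$ flips Heads on $S$ consecutive initiator-interactions; these all happen at steps $>T_k$, so this is contained in $E$.

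\textbf{Bounding $\Pr[E]$.} The adversary does not control the coins: the bit flipped at any interaction is fair and independent of everything that happened before it, in particular of which two agents interact at that step (the scheduler commits to this before the coin is revealed). If $E$ occurs, then for some pair $(v,j)$ the $j$-th through $(j+S-1)$-th interactions in which $v$ is the initiator all lie in $(T_k,T_k+N]$ and all come up Heads. Let $W_{v,j}$ be the event that $v$'s $j$-th such interaction lies in $(T_k,T_k+N]$; since $T_k$ is known from the strict past (or else the current step is not in the window and $W_{v,j}$ fails) and the current step index is known, $W_{v,j}$ is determined before that interaction's coin is flipped, so a telescoping argument over the $S$ coins -- each fair given the past -- gives $\Pr[\text{those }S\text{ coins are all Heads and }W_{v,j}]\le 2^{-S}\Pr[W_{v,j}]$. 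Therefore
\[
\Pr[E]\ \le\ \sum_{v}\sum_{j\ge1}2^{-S}\Pr[W_{v,j}]\ =\ 2^{-S}\,\mathbf{E}\big[\#\{(v,j):W_{v,j}\text{ holds}\}\big]\ =\ 2^{-S}N,
\]
because $(T_k,T_k+N]$ contains exactly $N$ interactions, each of which is the $j$-th initiator-interaction of its initiator for a unique $j$. With $S=\log(n/p)+\log\log n+c$ this equals $\tfrac{p}{n\log n}2^{-c}\cdot c\,np^{-1}\log n=c\,2^{-c}$, and since $x\mapsto x2^{-x}$ is decreasing for $x>1/\ln 2$ we get $c\,2^{-c}<2\cdot2^{-2}=\tfrac12$ for every $c>2$; hence $\Pr[T_{k+1}-T_k>N]\ge1-\Pr[E]\ge1-c\,2^{-c}>\tfrac12$.

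\textbf{Main obstacle.} I expect the delicate part to be the containment step: one must argue carefully that, once parked at the maximal $minute$ value of the maximal $hour$, an agent's $second$ variable behaves as a pure ``Heads-run'' random walk that is immune both to the one-way-epidemic copy rules and to the adversary, so that only a genuine run of $S$ unbiased Heads can advance $m'$. Granting that, the bound on $\Pr[E]$ is essentially a one-line union bound, the only subtlety being the measurability check that each $W_{v,j}$ does not depend on the coins it is conditioned against.
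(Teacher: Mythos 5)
Your proposal is correct and follows essentially the same route as the paper's proof: a union bound over all agents and all possible starting positions of a run of $S$ consecutive Heads within the window of $N = c\,np^{-1}\log n$ interactions, giving failure probability at most $N\cdot 2^{-S} = c\,2^{-c} < 1/2$ for $c>2$. The extra care you take with the containment step (why advancing $m'$ forces a fresh run of $S$ Heads after a reset) and with the measurability of $W_{v,j}$ under the adaptive adversary only makes explicit what the paper leaves implicit.
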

\begin{proof}
For $m'$ to increase by 1 starting from time $T_k$, at least one agent must observe $S$ consecutive heads in its coin flips.
Let us consider $cnp^{-1}\log n$ consecutive interactions starting from time 0. Let us denote by $x_v$ the amount of interactions agent $v$ took part in during this time as initiator. 
Note that $\sum_{v\in V} x_v = cnp^{-1}\log n$. 
Let us upper bound the probability of agent $v$ seeing $S$ consecutive heads during this time. The probability that agent $v$ sees a sequence of $S$ heads, starting exactly upon its  $j$-th interaction as initiator, and ending upon its $(\min \set{j+S, x_v})$-th interaction as initiator, is upper bounded by $2^{-S}$. Note that if $x_v-j<S$ the probability is 0, but the upper bound still holds. Now let us use a union bound over all values of $j$. This leads to an upper bound of $x_v\cdot2^{-S}$ for the probability that agent $i$ sees at least $S$ consecutive heads. Recall that $S=\log (n/p)+\log \log n+c$. To finish the proof we apply a union bound over all agents to get an upper bound of 
$$\sum_{v\in V} x_v\cdot2^{-S}= \frac{cnp^{-1} \log n}{2^c np^{-1} \log n} = c\cdot 2^{-c}$$
for the probability of at least one agent seeing $S$ consecutive heads over a period of $cnp^{-1}\log n$ interactions. Finally $Pr(T_{k+1} - T_k > c np^{-1} \log n) > 1-c\cdot 2^{-c}$. By setting $c>2$ we complete the proof.
\end{proof}

The above shows that with constant probability the maximum value of $minute$ among all agents does not increase too fast. This holds regardless of the value of $hour$. Next, we show that \emph{with high probability}, for every value of $i\geq 0$, the stretch of round $i$ is sufficiently large.


\begin{lemma}
\label{lem: hour lb}
For every $i\geq 0$, it holds with high probability that $L(i)=\Omega(np^{-1}\log^2 n)$.

\end{lemma}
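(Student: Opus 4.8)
The plan is to use, exactly as in the paragraph preceding the statement, the reduction of $L(i)$ to $S(i)$: once $\hour=i$ first appears on some agent, the one-way epidemic carries the predicate ``$\hour\ge i$'' to every agent within $O(np^{-1}\log n)$ steps, so $R_s(i)-R_e(i-1)=O(np^{-1}\log n)$ with high probability, whence $L(i)=R_e(i)-R_s(i)\ge S(i)-O(np^{-1}\log n)$. It therefore suffices to show $S(i)=\Omega(np^{-1}\log^2 n)$ w.h.p. First I would fix $i$ and shift time so that $R_e(i-1)=0$; since an agent never skips an $\hour$ value and resets its $\minute$ to $0$ whenever it advances its $\hour$, the first step at which any agent reaches $\hour=i+1$ is exactly $T_M$, so $S(i)=T_M=\sum_{k=0}^{M-1}(T_{k+1}-T_k)$. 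Hence it is enough to show that a constant fraction of the $M$ increments $T_{k+1}-T_k$ are each larger than $\Theta(np^{-1}\log n)$, which (since $M=\Theta(\log n)$) already forces $T_M=\Omega(np^{-1}\log^2 n)$.

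To do this I would fix $c=3$ in Lemma~\ref{lem: ub min} (so $S=\log(n/p)+\log\log n+3$) and set, for $k=0,\dots,M-1$, $X_k=1$ if $T_{k+1}-T_k>3np^{-1}\log n$ and $X_k=0$ otherwise. The crucial point — and the step I expect to need the most care — is to argue that $\Pr[X_k=1\mid \mathcal{F}_{T_k}]\ge 1/2$, where $\mathcal{F}_{T_k}$ is the history up to the stopping time $T_k$; by the tower rule this then also gives $\Pr[X_k=1\mid X_0,\dots,X_{k-1}]\ge 1/2$. This should follow by inspecting the proof of Lemma~\ref{lem: ub min}: it uses nothing about the configuration at time $T_k$ and nothing about how the adaptive adversary resolves the $3np^{-1}\log n$ interactions after $T_k$ (the identity $\sum_v x_v=3np^{-1}\log n$ holds no matter who the initiators are); it uses only that those interactions carry fresh, independent, unbiased coin bits, and that an increase of $m'$ requires some agent to see $S$ consecutive heads. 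Hence the conditional bound holds, with conditional probability at least $1-3\cdot 2^{-3}=5/8>1/2$, and $\{X_k\}_{k=0}^{M-1}$ satisfies the hypothesis of Theorem~\ref{thm: modified chernoff} with $q=1/2$.

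Then I would apply Theorem~\ref{thm: modified chernoff} with $q=1/2$ and $t=\Theta(\log n)$, choosing the constant in $t$ large enough that $e^{-\Theta(t)}=O(n^{-\alpha})$ for the target constant $\alpha$, and small enough that $\lceil 2q^{-1}t\rceil=\lceil 4t\rceil\le M-1$ (both possible because $M=\Theta(\log n)$ with a sufficiently large hidden constant). This yields $\sum_{k=0}^{M-1}X_k\ge\sum_{k=0}^{\lceil 4t\rceil}X_k\ge t$ with high probability. Since every $k$ with $X_k=1$ contributes more than $3np^{-1}\log n$ to $T_M$, this gives $S(i)=T_M>3t\,np^{-1}\log n=\Omega(np^{-1}\log^2 n)$, and finally $L(i)=R_e(i)-R_s(i)\ge S(i)-\bigl(R_s(i)-R_e(i-1)\bigr)\ge\Omega(np^{-1}\log^2 n)-O(np^{-1}\log n)=\Omega(np^{-1}\log^2 n)$ with high probability. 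If one needs the bound to hold simultaneously for all rounds $i$ up to a fixed polynomial in $n$, a union bound handles it at the cost of enlarging $\alpha$.

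As indicated, the one delicate point is promoting the constant-probability statement of Lemma~\ref{lem: ub min} to a genuine conditional-expectation lower bound along the natural filtration; this is exactly where the design of the $\second$-variable random walk pays off, since its behaviour over any window of interactions depends only on fresh coins and not on the adversary or the current configuration. Everything downstream is the mechanical application of Theorem~\ref{thm: modified chernoff} and the one-way-epidemic estimate already used in the reduction.
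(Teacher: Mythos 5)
Your proposal is correct and follows essentially the same route as the paper: reduce $L(i)$ to $S(i)$ via the one-way-epidemic bound, decompose $S(i)$ into the increments $T_{k+1}-T_k$, turn Lemma~\ref{lem: ub min} into the conditional bound $E[X_k \mid X_0,\dots,X_{k-1}]\ge 1/2$, and conclude with Theorem~\ref{thm: modified chernoff}. Your extra care in justifying the conditional bound (the argument of Lemma~\ref{lem: ub min} depends only on fresh coins, not on the configuration or the adversary's choices) and the explicit union bound over rounds are points the paper asserts implicitly, so there is no substantive difference.
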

\begin{proof}
Fix some $hour=i$, and let $X_k$ be the indicator variable for the event that $T_{k+1} - T_k > c n p^{-1} \log n$. 
According to Lemma~\ref{lem: ub min}, $E[X_k \mid X_0,...,X_{k-1}] > 1/2$ holds for any $k=0,1,...,M-1$. Let $X=\sum^{M-1}_{k=0} X_k$, and note that $S(i)\geq X\cdot c n p^{-1} \log n$. 
Ideally we would like to use a Chernoff bound to lower bound $X$, but unfortunately the $\set{X_k}$ variables are not independent. Thus, we apply Theorem~\ref{thm: modified chernoff} with parameters $q=1/2, t=M/4$ and get that:

\begin{align*}
\Pr\left[\sum^{M}_{i=0}X_i < M/4\right] 
\le  e^{-\Theta(M)}
\end{align*}

Recall that $M=\Theta(\log n)$, thus by setting $M$ sufficiently large we get that with high probability $S(i)=\Omega(np^{-1}\log^2 n)$. As noted before, this implies that $L(i)=\Omega(np^{-1}\log^2 n)$, which completes the proof.
\end{proof}

We note that the lower bound holds even when all interactions are chosen adversarially ($p=0$). The only reason $p$ appears in the lower bound is due to the definition of $S$. 
We continue to prove our upper bound, for which the existence of random interactions is crucial.
Specifically, we require the existence of random interactions in order to utilize one-way epidemics.

\myparagraph{Upper bounding $S(i)$} In what follows we upper bound $S(i)$ directly. As before, we first consider $m'$, the maximum value of the $minute$ variable, and show that it increases sufficiently fast.

\begin{lemma}
\label{lem:increase}
From any configuration where $m'=j < M$ holds, 
$m'$ increases to $j+1$ within $dn p^{-2} \log n$ steps with a constant probability, for a sufficiently large constant $d$.
\end{lemma}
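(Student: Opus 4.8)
The plan is to split the window of $dnp^{-2}\log n$ steps into two phases. In the first phase, of length $\Theta(np^{-1}\log n)$, I will use one-way epidemics to drive the population into a \emph{synchronized} configuration in which \emph{every} agent has $hour=i$ and $minute=j$ (or else $m'$ has already exceeded $j$, and we are done). In the second phase, of length $\Theta(np^{-2}\log n)$, I will show that with constant probability at least one agent observes $S$ consecutive heads among its coin flips while sitting at $(hour,minute)=(i,j)$, which forces its $minute$ to $j+1$ and hence $m'$ to $j+1$.

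For the first phase, note that while $m'=j<M$ no agent can have $hour>i$: the first agent to reach $hour\ i{+}1$ would need $minute=M$ at $hour\ i$, contradicting $m'<M$. Hence $i$ is the maximum $hour$ in the population, so the maximum $hour$, and among $hour$-$i$ agents the maximum $minute$, each propagate by the ``initiator copies up'' rules of Algorithm~\ref{alg: phase clock} exactly like a one-way epidemic. By the one-way epidemic bound for $\Gamma_s$ established above, within $\Theta(np^{-1}\log n)$ steps either $m'$ has exceeded $j$ (done), or every agent has $hour=i,\ minute=j$. A short check of Algorithm~\ref{alg: phase clock} shows that from such a configuration no agent's $minute$ can decrease and no agent's $second$ can be reset by an epidemic rule, since every agent sits at the common maximum (both epidemic-reset guards, $u.hour<v.hour$ and $u.minute<v.minute$, are false). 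Consequently each agent's $second$ now performs the pure random walk ``$+1$ on an initiator-heads, reset to $0$ on an initiator-tails or upon reaching $S$'', and any agent that attains $second=S$ immediately raises its $minute$, and therefore $m'$, to $j+1$.

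For the second phase, condition on the high-probability event (a Chernoff bound on a sum of i.i.d.\ Bernoulli$(p)$) that at least $\tfrac14 dnp^{-1}\log n$ of the remaining steps are random; on each random step the initiator is uniform and flips a fair coin. Let $X$ count the maximal length-$S$ runs of heads over all agents' initiator coin-flip sequences. Since $S=\log(n/p)+\log\log n+O(1)$ gives $2^{-S}=\Theta\!\big(p/(n\log n)\big)$, and since $nS=O\!\big(n\log(n/p)\big)=O(np^{-1}\log n)$ is at most half the number of random steps once $d$ is a large constant, a straightforward first-moment estimate gives $\ex[X]=\Theta(d)$. Two length-$S$ heads-runs at positions within distance $S$ are mutually exclusive (the boundary flip would have to be both heads and tails), while runs farther apart use disjoint coin flips and are independent; this yields $\mathrm{Var}(X)\le\ex[X]$ conditionally on the number of initiator slots of each agent, hence unconditionally on the event above. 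Chebyshev then gives $\Pr[X=0]\le 1/\ex[X]=O(1/d)$, so for $d$ large enough $\Pr[X\ge 1]$ is a constant. When $X\ge 1$ some agent sees $S$ consecutive heads and, by the first-phase observation, $m'$ reaches $j+1$.

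The step I expect to be the main obstacle is the second-moment estimate of the second phase. A single tracked agent receives only $\Theta(p^{-1}\log n)$ initiator coin flips and so sees $S$ consecutive heads with probability only $\Theta(1/n)$; the bound therefore genuinely requires all $n$ agents to serve as candidates simultaneously, which is precisely why the epidemic-synchronization of the first phase is indispensable. Making this rigorous means controlling $\mathrm{Var}(X)$ via the inconsistency of nearby runs and the independence of far runs, while also accounting for the randomness in how many times each agent is chosen as initiator on the random steps.
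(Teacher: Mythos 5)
Your overall architecture --- first synchronize every agent to $(\hour,\minute)=(i,j)$ via the one-way epidemic, then argue that within $O(np^{-2}\log n)$ further steps some agent accumulates $S$ consecutive heads --- is the same as the paper's, and your first phase is fine. The gap is in the second phase, and it is a real one. First, it matters over which flips your run-count $X$ ranges. If $X$ counts heads-runs only among the flips made at \emph{random} steps, then $X\ge 1$ does not imply that any agent's $\second$ counter reaches $S$: adversarial interactions also draw a fair bit for their initiator (the adversary chooses the pair, not the bit), so between two consecutive random-step flips of an agent the adversary can insert further flips of that same agent, and a tails among them resets $\second$. A length-$S$ heads-run in the random-step subsequence is therefore not a length-$S$ heads-run in the agent's actual flip sequence. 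Second, if $X$ instead counts runs in the full flip sequences, then the step ``$\mathrm{Var}(X)\le\mathbf{E}[X]$ conditionally on the number of initiator slots of each agent, hence unconditionally'' fails against the adaptive adversary: the number of times an agent is scheduled as initiator is chosen by the adversary \emph{as a function of the realized bits}, so conditioning on these counts does not leave you with i.i.d.\ fair bits, and neither the first- nor the second-moment computation is licensed as written. (A smaller inaccuracy: two maximal runs at distance exactly $S+1$ share a boundary flip that both require to be tails, so they are positively rather than negatively correlated; this is harmless but must be absorbed into the variance bound.)

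Both defects are repairable with one modification: let $R_v$ be the number of random steps in the window at which $v$ is the initiator. The smoothing coin and the uniform choice of the interacting pair are independent of the protocol's random bits, so $\{R_v\}_{v\in V}$ is independent of all coin values; each agent makes at least $R_v$ flips in the window, and you may count runs within the first $R_v$ flips of each agent's \emph{full} sequence, which given $\{R_v\}_v$ are i.i.d.\ fair bits. At that point, though, compare with the paper's route: it chops each agent's flip sequence at the tails into ``lottery games,'' each won with probability exactly $2^{-S}$, and these games are genuinely independent across agents and across game indices, so $\Pr[\text{no win}]\le(1-2^{-S})^{n\tau}\le e^{-\Omega(d)}$ follows from a plain product bound with no variance estimate; the only remaining check is that every agent plays $\tau=\Theta(dp^{-1}\log n)$ games, which is exactly your $R_v$ bound. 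The tail-delimited decomposition is what removes the overlapping-runs bookkeeping that forces you into Chebyshev.
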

\begin{proof}
Without loss of generality, we assume that
every agent satisfies $\hour=i$ and $\minute=j$ in the configuration because
the one way epidemic propagates $\max_{v \in V}v.\hour$
and $\max_{v \in V, v.hour=i} v.\minute$ to all agents within $O(p^{-1}n \log n)$ steps with high probability.

In our algorithm, we can say that
each agent $v \in V$ plays a \emph{lottery game} repeatedly. 
Agent $v$ starts one round of the game each time it sees a tail.
If $v$ sees $S$ consecutive heads before the next tail,
$v$ \emph{wins} the game in that round.
Otherwise, (i.e., if $v$ sees less than $S$ heads before it sees the next tail),
$v$ \emph{loses} the game in that round. 
When an agent sees the next tail (or wins the round), the next round of the game begins. 
At each round of the game, $v$ wins the game with probability $2^{-S} = \Omega(p/(n \log n))$
(Recall that $S=\log (np^{-1}) +\log \log n + O(1)$).
The goal of our proof is to show that with constant probability, 
some agent wins the game at least once within $dnp^{-2} \log n$ steps, for a sufficiently large constant $d$.

For an agent $v$, we denote by $W_i(v)$ the event that $v$ wins it's $i$-th game. Note that the $W_i(v)$ events are independent of each other for all values of $i$ and $v$. This is because for every interaction only the initiator flips a coin, and the coins used for every game don't overlap.
Let us also denote by $Y_k(v) = \bigvee^k_{i=1} W_i(v)$, the event that agent $v$ wins at least once in its first $k$ games. Let $Y_k = \bigvee_{v\in V} Y_k(v)$, be the event that at least one agent wins at least one of its first $k$ games.

Let $d$ be a sufficiently large constant and let $\tau = (d p^{-1} \log n)/4$. Let us denote by $X(v)$, the event that agent $v$ plays at least $\tau$ games in the first $d p^{-2} n\log n = 4np^{-1} \tau$ steps, and let $X = \bigwedge_{v \in V} X(v)$. Finally we are interested in lower bounding the probability of $Z$, the event that at least one agent wins a game within the first $4np^{-1} \tau$ steps. We note that it holds that $\Pr[Z] \geq \Pr[X \wedge Y_\tau]$. That is, if all agents play at least $\tau$ games within the first $4np^{-1} \tau$ steps, and at least one agents wins one of its first $\tau$ games then event $Z$ occurs. Applying a union bound, we write $\Pr[Z] = 1 - \Pr[\neg X \vee  \neg Y_\tau] \geq 1- \Pr[\neg X] - \Pr[ \neg Y_\tau]$. In order to conclude the proof we wish to show that $\Pr[ \neg Y_\tau] <1/3$ and $\Pr[\neg X] < 1/3$.

 To bound $\Pr [\neg X]$ it is sufficient to show that within the first $4np^{-1} \tau$ steps every agent sees at least $\tau$ tails with high probability. As every interaction is chosen uniformly at random with probability $p$, and the initiator / responder order is also random, within the first $4np^{-1} \tau$ steps each agent will observe at least $2 \tau$ tails
 in expectation. Applying a Chernoff bound for a sufficiently large constant $d$, we get that every agent will observe $\tau$ or more tails w.h.p. Thus $\Pr[\neg X] = O(1/n) < 1/3$.

Next, we bound $\Pr [\neg Y_\tau]$. Expanding the expression, we get:

\begin{align*}
  \Pr [\neg Y_\tau] = Pr\left[\bigwedge_{v \in V} \neg Y_\tau (v)\right] = Pr\left[\bigwedge_{v \in V} \bigwedge^\tau_{i=1}  \neg W_i(v)\right] = (1- 2^{-S})^{n\tau} \leq e^{-\frac{ n \tau}{2^S}} =e^{-\Omega(d)} \le 1/3,  
\end{align*}
where in the above we use the independence of the $\set{W_i(v)}$ events, and the fact that $d$ is sufficiently large. This completes the proof.

\end{proof}


We are now ready to prove our upper bound.
\begin{lemma}
\label{lem: hour ub}
For every $i\geq 0$, it holds with high probability that $S(i)=O(np^{-2}\log^2 n)$.
\end{lemma}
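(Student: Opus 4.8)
The strategy is to chain together $M$ applications of Lemma~\ref{lem:increase} together with a concentration argument, exactly mirroring the structure used for the lower bound. Recall $S(i) = R_e(i) - R_e(i-1)$, and as before we normalise time so that $R_e(i-1) = 0$, i.e.\ time $0$ is when the first agent reaches $\hour = i$. Let $m' = \max_{v \in V,\, v.\hour = i} v.\minute$, and let $T_k$ be the first step at which $m' = k$. Round $i$ ends ($\hour = i+1$ first appears) no later than the first step at which some agent with $\hour=i$ has $\minute = M$ and then wins one more lottery round; in particular $R_e(i) \le T_M + O(np^{-2}\log n)$, since once $m'=M$ a single further lottery win (Lemma~\ref{lem:increase} with $j=M-1$ in spirit, or a direct repetition of its argument) within $O(np^{-2}\log n)$ steps pushes some agent's $\hour$ to $i+1$. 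So it suffices to show $T_M = O(np^{-2}\log^2 n)$ w.h.p.

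For this I would partition the interval $[0, T_M]$ into the $M$ epochs $[T_k, T_{k+1})$. Lemma~\ref{lem:increase} says that from the configuration at time $T_k$ (where $m' = k < M$), within $dnp^{-2}\log n$ steps $m'$ increases to $k+1$ with constant probability $\rho$. This gives a geometric-type bound on each epoch length, but the epochs are dependent, so the clean way to proceed is: break time into consecutive \emph{blocks} of length $dnp^{-2}\log n$ and define $X_k = 1$ if $m'$ strictly increases during block $k$ (conditioned on it not yet having reached $M$; once it has, set $X_k=1$ trivially). By Lemma~\ref{lem:increase} and the fact that its hypothesis ("from any configuration with $m' = j < M$") makes the bound hold regardless of the past, we get $\ex[X_k \mid X_0,\dots,X_{k-1}] \ge \rho$ for a constant $\rho$. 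Then $m'$ reaches $M$ once $\sum_k X_k \ge M$, so I apply Theorem~\ref{thm: modified chernoff} with $q = \rho$ and $t = M$: after $\lceil 2\rho^{-1} M\rceil = \Theta(M)$ blocks, $\sum X_k \ge M$ except with probability $e^{-\Theta(M)}$. Since $M = \Theta(\log n)$ with a large constant, this is w.h.p., and the total number of steps is $\Theta(M) \cdot dnp^{-2}\log n = O(np^{-2}\log^2 n)$. Adding the $O(np^{-2}\log n)$ slack for the final push and the $O(np^{-1}\log n)$ epidemic-propagation terms (dominated) gives $S(i) = O(np^{-2}\log^2 n)$.

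Finally I would take a union bound over all rounds $i$ up to any $\mathrm{poly}(n)$ horizon to make the statement hold simultaneously for every $i \ge 0$ w.h.p. (adjusting the constant in $M$ to absorb the extra $O(\log n)$ factor in the union bound). The main obstacle is the dependence issue: Lemma~\ref{lem:increase} is a statement about a single phase starting from a worst-case configuration, and to glue $M$ of them I must be careful that the conditional expectation bound $\ex[X_k \mid \text{past}] \ge \rho$ genuinely follows from the "from any configuration" phrasing — that is, that the adversary's choices in earlier blocks cannot conspire to defeat the per-block success probability. This is exactly what the "any configuration" hypothesis of Lemma~\ref{lem:increase} buys us, so the argument goes through, but it is the step that needs the most care in the write-up; everything else is a routine invocation of Theorem~\ref{thm: modified chernoff} plus Chernoff/epidemic bookkeeping.
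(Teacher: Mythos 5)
Your proposal is correct and follows essentially the same route as the paper's proof: partition time into blocks of length $dnp^{-2}\log n$, let $X_k$ indicate that $m'$ increases during block $k$, use the ``from any configuration'' form of Lemma~\ref{lem:increase} to get $E[X_k \mid X_0,\dots,X_{k-1}]\ge q$, and apply Theorem~\ref{thm: modified chernoff} with $t=M=\Theta(\log n)$ to conclude $S(i)\le O(M)\cdot dnp^{-2}\log n=O(np^{-2}\log^2 n)$. The only (harmless) deviation is your extra $O(np^{-2}\log n)$ ``final push'' term, which is unnecessary because the algorithm increments $\hour$ in the very interaction in which $\minute$ reaches $M$.
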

\begin{proof}
Let $d$ be a sufficiently large constant that satisfies the conditions of Lemma \ref{lem:increase}
and $\tau =  d n p^{-2} \log n$.
Fix some $hour=i$, and let $X_j$ be the indicator variable such that $X_j=1$ holds if and only if $m'$ increases at least by one
from the $(j-1)\tau$ step to the $\tau j-1$ step. 
By Lemma \ref{lem:increase}, it holds that $E[X_i \mid X_0,...,X_{i-1}] \geq q$ for some constant $q$. We finish the proof by applying Theorem~\ref{thm: modified chernoff} with parameter $t=M$ and get that:

\begin{align*}
\Pr\left[\sum^{ 2q^{-1} M}_{i=0}X_i < M\right] 
\le e^{-\Theta(M)}
\end{align*}
Recall that $M=\Theta(\log n)$, thus setting $M$ sufficiently large, implies that $S(i) \leq 2q^{-1} M\cdot d n p^{-2} \log n =O(np^{-2}\log^2 n)$ with high probability.
\end{proof}

Finally, we state our main theorem: 
\begin{theorem}
\label{thm: phase clock final}
Algorithm~\ref{alg: phase clock} is a $((np^{-1} \log^2 n), (np^{-2} \log^2 n))$-phase clock
that uses $O((\log n) \cdot (\log (p^{-1}n)))$ states for $\Gamma_s$.
%
\end{theorem}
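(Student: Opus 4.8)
The plan is to assemble the theorem from three ingredients already in hand: the lower bound on round length (Lemma~\ref{lem: hour lb}), the upper bound on round stretch (Lemma~\ref{lem: hour ub}), and a direct count of the state space of Algorithm~\ref{alg: phase clock}.

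First I would do the state count. Each agent stores the triple $(\second,\minute,\hour)$. By construction $\second$ ranges over $\{0,\dots,S\}$ with $S = \log(n/p) + \log\log n + O(1) = \Theta(\log(n/p))$, and $\minute$ ranges over $\{0,\dots,M\}$ with $M = \Theta(\log n)$. The variable $\hour$ is a priori unbounded, but as remarked after the algorithm's description it can be confined to a domain of constant size, either by taking it modulo a suitable constant (as in \cite{berenbrink2020time}) or by freezing it once it reaches a prescribed cap (as in \cite{GS20,GSU19}); neither modification affects the round-length or stretch analysis, which only ever compares consecutive values of $\hour$. Hence $|Q| = O(S\cdot M) = O((\log(p^{-1}n))\cdot(\log n))$, matching the claimed state bound.

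Next I would invoke the two timing guarantees. Fix $i\ge 0$. Lemma~\ref{lem: hour lb} gives $L(i) = \Omega(np^{-1}\log^2 n)$ with high probability, i.e.\ $L(i)\ge d_1\cdot np^{-1}\log^2 n$ for a constant $d_1$ that can be made as large as desired by enlarging the hidden constant in $M=\Theta(\log n)$ together with $c$ in the definition of $S$ (this only enlarges the $\Omega$-constant produced by Lemmas~\ref{lem: ub min} and \ref{lem: hour lb}). Symmetrically, Lemma~\ref{lem: hour ub} gives $S(i) = O(np^{-2}\log^2 n)$ with high probability, i.e.\ $S(i)\le d_2\cdot np^{-2}\log^2 n$. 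Matching these against Definition~\ref{thm: phase clock def} with $f(n)=np^{-1}\log^2 n$ and $g(n)=np^{-2}\log^2 n$ establishes that Algorithm~\ref{alg: phase clock} is an $(f(n),g(n))$-phase clock for $\Gamma_s$.

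The only point needing a word of care is the quantifier: Definition~\ref{thm: phase clock def} asks that the guarantees hold with high probability for every $i$, whereas the lemmas are stated per $i$. Since ``with high probability'' here means $1-O(n^{-\alpha})$ for an arbitrarily large constant $\alpha$, a union bound over the first $\mathrm{poly}(n)$ rounds---the only regime relevant when the phase clock is plugged into a protocol with $\mathrm{poly}(n)$ running time---still yields a high-probability bound after absorbing the polynomial loss into $\alpha$. I do not expect a genuine obstacle at this stage: all of the real work lives in Lemmas~\ref{lem: ub min}, \ref{lem: hour lb}, \ref{lem:increase}, and \ref{lem: hour ub}, and the theorem is essentially their bookkeeping consequence.
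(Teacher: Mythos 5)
Your proposal is correct and matches the paper's (implicit) argument: the theorem is stated as the direct combination of Lemma~\ref{lem: hour lb} and Lemma~\ref{lem: hour ub} with the obvious state count $O(S \cdot M) = O((\log n)\cdot(\log(p^{-1}n)))$ for the bounded $(\second,\minute,\hour)$ triple. Your added remarks on bounding $\hour$ and on the union bound over polynomially many rounds are consistent with how the paper itself handles these points.
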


\section{Leader election}
\label{sec: leader election}
We analyze the following leader election protocol, as described in \cite{SudoOIKM20} (the module backup)\footnotemark{}.\footnotetext{Essentially the same idea was presented in \cite{AAG18} before.
}
In the protocol every agent is either a \emph{a leader} or a \emph{follower}. This is represented via a binary $leader$ variable. We assume that initially there is at least one leader in the population. Every agent also has a $level$ variable initiated to 0 and bounded by the value $\lmax=\Omega(\log n)$. 
The protocol assumes the existence of a phase clock in the system. For every agent we call the time between two consecutive increases of the $hour$ variable of the phase clock an \emph{epoch} \emph{for that agent} (this is a subjective value per agent, not to be confused with a \emph{round} as was defined in the previous section). For our usage we can bound the range of the $hour$ variable by a small constant. This can be easily implemented via a modulo operation (see \cite{SudoOIKM20} for a detailed implementation), where the duration of each round still has the same guarantees of Theorem~\ref{thm: phase clock final}. To simplify the pseudo-code we introduce a $tick$ variable which is raised for an agent only in the first interaction it takes part in as an initiator once it enters a new epoch.

The algorithm consists of two parts, where the first part guarantees that we quickly converge to a single leader with high probability, while the second part guarantees that the population \emph{always} reaches a state where there exists a single leader. Accordingly, the second part is very slow to converge, but is rarely required.

\begin{enumerate}
    \item On the first interaction in each epoch, a leader makes a coin flip and increments the $level$ variable if it observed heads (up to the limit $\lmax$). Thereafter, the maximum level in the population is shared among all the agents via one-way epidemic. A leader becomes a follower when it observes a higher level than it's own.

    \item When two leaders interact, one remains a leader and the other one becomes a follower.

\end{enumerate}
 The pseudocode for the above is given in Algorithm~\ref{alg: leader election}.

\RestyleAlgo{boxruled}
\LinesNumbered
\begin{algorithm}[htbp]
	\DontPrintSemicolon
	\caption{Leader election}
	\label{alg: leader election}
	$\lmax\gets\Theta(\log n)$\\
	$\forall v\in V, v.level\gets 0$\\
	
	\ForEach{interaction $(u,v)$}
    {
        //One way epidemic\\
        \If{$u.level < v.level$}
        {
            $u.leader \gets false$\\
            $u.level \gets v.level$\\
        }
        //$u.tick\gets true$ when $u$ enters a new epoch\\ 
        \If{$u.tick = true$ and $u.leader = true$}
        {
        $u.tick \gets false$\\
        $u$ makes a fair coin flip\\
        \If{Heads}{
            $u.level\gets \min \set{u.level+1, \lmax}$\\
        }
        }
        
        \If{$v.leader = true$ and $u.leader = true$}{$u.leader \gets false$}
    }

\end{algorithm}


In \cite{SudoOIKM20} the correctness and running time are analyzed under $\Gamma_u$. First let us present the correctness analysis. That is, there is always at least one leader in the population. Roughly speaking, this is because the first part always keeps the leader with the highest level, while the second part only eliminates a leader if it interacts with another leader. So at any point in time when a leader is eliminated, it can "blame" a leader which currently exists.

As the run-time analysis of \cite{SudoOIKM20} is for $\Gamma_u$ it is not immediately clear what are the implications for $\Gamma_s$. Luckily, the analysis still goes through as long as we have a phase clock for $\Gamma_s$. Let us present a simplified analysis, which is somewhat different than the analysis presented in \cite{SudoOIKM20}. We aim to bound the time it takes to reduce the number of leaders to $1$. First we note that because the length of a round is $\Omega(p^{-1}n\log^2 n)$ steps, then with high probability every leader has at least one interaction during the first half of the round, also the information of the interaction is guaranteed to spread to the entire population within the round with high probability. This guarantees that for every round, every leader flips a coin, and the maximum level is propagated throughout the population via the one-way epidemic within that round.
For the rest of the analysis we assume that indeed every leader has at least one interaction per epoch and that the phase clock and one-way epidemic function correctly. As these events happen with high probability, we can guarantee that they hold throughout the execution of the first $\Theta(np^{-2} \log^3 n)$ steps with high probability via a simple union bound\footnote{Note that when the execution time goes to infinity, these guarantees (\ie synchronization via a phase clock) eventually fail. But our protocol has long since converged by this time.}.

Let us denote by $L_i$ the random variable for the number of leaders remaining after round $i$, where $L_0>0$ is the initial number of leaders. Then
it holds that $E[L_i]\leq L_{i-1}(1/2+2^{-L_{i-1}})$. This is because the distribution of $L_i$ behaves exactly like $B(L_{i-1}, 1/2)$ (number of heads when tossing $L_{i-1}$ fair coins), with the exception that if all coins are tails, we get $L_{i-1}$ leaders remaining instead of 0. Thus, when computing the expectation we must add a $L_{i-1}2^{-L_{i-1}}$ term. Finally, note that $L_{i-1}(1/2+2^{-L_{i-1}})\leq \frac{3}{4}L_{i-1}$ for all $L_{i-1}\geq 2$.
Using Markov's inequality, it holds that 
$$Pr[L_i \geq \frac{33}{40}L_{i-1}] = Pr[L_i \geq \frac{3}{4}L_{i-1}\cdot\frac{11}{10}] \leq \frac{10}{11}$$

Let us denote by $X_i$ the indicator random variable for the event that $L_i < \frac{33}{40}L_{i-1}$. Then it holds that $E[X_i \mid X_1,...,X_{i-1}]>q$ for some constant $q$.
To complete the proof we apply Theorem~\ref{thm: modified chernoff} and get that within $O(\log n)$ epochs we remain with a single leader with high probability. As every epoch requires $O(p^{-2}n\log^2 n )$ steps, we get a unique leader with high probability in $O(p^{-2}n\log^3 n )$ steps. 
Combining this with the cost of executing the slower second phase of the algorithm up to convergence, we get an expected running time of $O(p^{-2}n\log^3 n )$. This is because the second part requires $O(p^{-1}n^2)$ steps to completes, but is only required if the first part fails, which happens with probability $O(n^{-2})$. Thus, the second part's contribution to the expected stabilization time is $O(1/p)$.
We state the following theorem:
\begin{theorem}
\label{thm: leader election}
For every $p<1$, leader election can be solved under $\Gamma_s$ in $O(p^{-2}n\log^3 n)$ steps with high probability and in expectation using $\Theta((\log^2 n) \cdot (\log (np^{-1})))$ states. 
\end{theorem}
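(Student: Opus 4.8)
The plan is to run Algorithm~\ref{alg: leader election} on top of the phase clock of Algorithm~\ref{alg: phase clock}, and to establish correctness and running time separately, largely reusing the sketch given above. For \textbf{correctness} (a unique leader is eventually elected with probability $1$ under $\Gamma_s$), I would first observe that the number of leaders is non-increasing: the $\level$ variable of any agent is non-decreasing, and an agent is demoted only when (i) it sees a strictly larger level (Part 1) or (ii) it meets another leader (Part 2). In case (i) the demoted agent did not hold the maximum level, so a leader with a strictly larger level survives; in case (ii) the responder is a leader that stays a leader. Hence at least one leader always remains. Since the single-leader configuration is absorbing (Part 1 never creates leaders, Part 2 is vacuous with one leader), it suffices to show the count reaches $1$; the residual probability mass is mopped up by Part 2, where two leaders meeting under $\Gamma_s$ eventually reduces the count, exactly as in the $\Gamma_u$ analysis of~\cite{SudoOIKM20}.

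For the \textbf{running time}, I would condition on a "good event" that, throughout the first $\Theta(p^{-2}n\log^3 n)$ steps, (a) the phase clock meets the guarantees of Theorem~\ref{thm: phase clock final}, (b) in every round every leader acts as initiator at least once during the first half of the round, and (c) in every round the one-way epidemic spreading the maximum level completes before the round ends. Item (b) holds w.h.p.\ because $L(i)=\Omega(p^{-1}n\log^2 n)\gg p^{-1}n\log n$, so a Chernoff bound on the $\Omega(\log^2 n)$ random interactions in the first half gives any fixed agent an initiator role w.h.p.; item (c) holds because the one-way epidemic finishes in $O(p^{-1}n\log n)$ steps w.h.p., which is $o(L(i))$. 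Since only $O(\log n)$ rounds will be needed, a union bound over rounds and over the $n$ agents keeps the good event w.h.p.\ over the whole horizon.

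On the good event, in each round every surviving leader flips exactly one fresh coin and the resulting maximum level then propagates to everyone, so writing $L_i$ for the number of leaders after round $i$ we get $\ex[L_i \mid L_{i-1}] \le L_{i-1}(1/2 + 2^{-L_{i-1}}) \le \tfrac34 L_{i-1}$ whenever $L_{i-1}\ge 2$; by Markov, $L_i < \tfrac{33}{40}L_{i-1}$ with probability at least $\tfrac{1}{11}$, and the indicators of these "progress" rounds satisfy the hypothesis of Theorem~\ref{thm: modified chernoff}, so after $O(\log n)$ rounds the count is $1$ w.h.p. Each round costs $S(i)=O(p^{-2}n\log^2 n)$ steps by Theorem~\ref{thm: phase clock final}, for a total of $O(p^{-2}n\log^3 n)$ steps w.h.p. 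For the \textbf{expectation} bound I would add the rare branch where the above fails: Part 2 alone drives the count to one in $O(p^{-1}n^2)$ steps in expectation (summing the $O(n^2/(pk^2))$ meeting time over $k$ from $L_0$ down to $2$, using monotonicity of the count), and this branch is entered only with probability $O(n^{-2})$, contributing $O(p^{-1})=o(p^{-2}n\log^3 n)$. The \textbf{state count} is the product of the phase-clock state space $O((\log n)\cdot\log(np^{-1}))$ with the $O(\log n)$ values of $\level$ and the $O(1)$ values of $\leader$, i.e.\ $\Theta((\log^2 n)\cdot\log(np^{-1}))$.

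The step I expect to be the main obstacle is cleanly relating the per-agent notion of \emph{epoch} to the global notion of \emph{round}: agents may advance their $\hour$ at slightly different times, yet the argument needs that within each round every leader flips exactly one coin and the maximum level fully propagates. This is precisely what the large round length $L(i)$ and the fast one-way epidemic buy, but making the union bound over $O(\log n)$ rounds and all $n$ leaders go through uniformly — and arguing that the phase-clock guarantees of Theorem~\ref{thm: phase clock final}, which are stated for a fixed $i$ w.h.p., survive aggregation over the relevant horizon — is the delicate part of the write-up.
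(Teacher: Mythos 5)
Your proposal is correct and follows essentially the same route as the paper: conditioning on the phase-clock and one-way-epidemic guarantees over the first $\Theta(p^{-2}n\log^3 n)$ steps, the per-round recursion $\ex[L_i]\le L_{i-1}(1/2+2^{-L_{i-1}})\le\tfrac34 L_{i-1}$ with Markov's inequality and Theorem~\ref{thm: modified chernoff} to get $O(\log n)$ rounds of length $O(p^{-2}n\log^2 n)$ each, the expectation bound via the slow pairwise-elimination backup entered with probability $O(n^{-2})$, and the same state-count accounting. The epoch-versus-round subtlety you flag is handled in the paper exactly as in your good events (b) and (c), so there is no gap.
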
 

\bibliography{paper6}

\begin{thebibliography}{10}

\bibitem{AAE+17}
Dan Alistarh, James Aspnes, David Eisenstat, Rati Gelashvili, and Ronald~L
  Rivest.
\newblock Time-space trade-offs in population protocols.
\newblock In {\em Proceedings of the Twenty-Eighth Annual ACM-SIAM Symposium on
  Discrete Algorithms}, pages 2560--2579. SIAM, 2017.

\bibitem{AAG18}
Dan Alistarh, James Aspnes, and Rati Gelashvili.
\newblock Space-optimal majority in population protocols.
\newblock In {\em Proceedings of the Twenty-Ninth Annual ACM-SIAM Symposium on
  Discrete Algorithms}, pages 2221--2239. SIAM, 2018.

\bibitem{AG15}
Dan Alistarh and Rati Gelashvili.
\newblock Polylogarithmic-time leader election in population protocols.
\newblock In {\em Proceedings of the 42nd International Colloquium on Automata,
  Languages, and Programming}, pages 479--491, 2015.

\bibitem{AAF+08}
D.~Angluin, J.~Aspnes, M.~J Fischer, and H.~Jiang.
\newblock Self-stabilizing population protocols.
\newblock {\em ACM Transactions on Autonomous and Adaptive Systems}, 3(4):13,
  2008.

\bibitem{AAD+06}
Dana Angluin, James Aspnes, Zo\"{e} Diamadi, Michael~J. Fischer, and Ren\'{e}
  Peralta.
\newblock Computation in networks of passively mobile finite-state sensors.
\newblock {\em Distributed Computing}, 18(4):235--253, 2006.

\bibitem{AAE08}
Dana Angluin, James Aspnes, and David Eisenstat.
\newblock Fast computation by population protocols with a leader.
\newblock {\em Distributed Computing}, 21(3):183--199, 2008.

\bibitem{BBB13}
J.~Beauquier, P.~Blanchard, and J.~Burman.
\newblock Self-stabilizing leader election in population protocols over
  arbitrary communication graphs.
\newblock In {\em International Conference on Principles of Distributed
  Systems}, pages 38--52, 2013.

\bibitem{BeauquierBBG15}
Joffroy Beauquier, Peva Blanchard, Janna Burman, and Rachid Guerraoui.
\newblock The benefits of entropy in population protocols.
\newblock In {\em {OPODIS}}, volume~46 of {\em LIPIcs}, pages 21:1--21:15.
  Schloss Dagstuhl - Leibniz-Zentrum f{\"{u}}r Informatik, 2015.

\bibitem{BEF+20}
Petra Berenbrink, Robert Els{\"a}sser, Tom Friedetzky, Dominik Kaaser, Peter
  Kling, and Tomasz Radzik.
\newblock Time-space trade-offs in population protocols for the majority
  problem.
\newblock {\em Distributed Computing}, pages 1--21, 2020.

\bibitem{berenbrink2020time}
Petra Berenbrink, Robert Els{\"a}sser, Tom Friedetzky, Dominik Kaaser, Peter
  Kling, and Tomasz Radzik.
\newblock Time-space trade-offs in population protocols for the majority
  problem.
\newblock {\em Distributed Computing}, pages 1--21, 2020.

\bibitem{BGK20}
Petra Berenbrink, George Giakkoupis, and Peter Kling.
\newblock Optimal time and space leader election in population protocols.
\newblock In {\em Proceedings of the 52nd Annual ACM SIGACT Symposium on Theory
  of Computing}, pages 119--129, 2020.

\bibitem{BCER17}
Andreas Bilke, Colin Cooper, Robert Els\"{a}sser, and Tomasz Radzik.
\newblock Brief announcement: Population protocols for leader election and
  exact majority with {$O(\log^2 n)$} states and {$O(\log^2 n)$} convergence
  time.
\newblock In {\em Proceedings of the 38th ACM Symposium on Principles of
  Distributed Computing}, pages 451--453, 2017.

\bibitem{BDN+19}
Janna Burman, David Doty, Thomas Nowak, Eric~E Severson, and Chuan Xu.
\newblock Efficient self-stabilizing leader election in population protocols.
\newblock {\em arXiv preprint arXiv:1907.06068}, 2019.

\bibitem{SIW12}
S.~Cai, T.~Izumi, and K.~Wada.
\newblock How to prove impossibility under global fairness: On space complexity
  of self-stabilizing leader election on a population protocol model.
\newblock {\em Theory of Computing Systems}, 50(3):433--445, 2012.

\bibitem{CP07}
D.~Canepa and M.~G. Potop-Butucaru.
\newblock Stabilizing leader election in population protocols.
\newblock 2007.
\newblock http://hal.inria.fr/inria-00166632.

\bibitem{cardelli2012cell}
Luca Cardelli and Attila Csik{\'a}sz-Nagy.
\newblock The cell cycle switch computes approximate majority.
\newblock {\em Scientific reports}, 2(1):1--9, 2012.

\bibitem{ChatterjeePP20}
Soumyottam Chatterjee, Gopal Pandurangan, and Nguyen~Dinh Pham.
\newblock Distributed {MST:} {A} smoothed analysis.
\newblock In {\em {ICDCN}}, pages 15:1--15:10. {ACM}, 2020.

\bibitem{ChenCDS17}
Ho{-}Lin Chen, Rachel Cummings, David Doty, and David Soloveichik.
\newblock Speed faults in computation by chemical reaction networks.
\newblock {\em Distributed Comput.}, 30(5):373--390, 2017.

\bibitem{CC19}
Hsueh-Ping Chen and Ho-Lin Chen.
\newblock Self-stabilizing leader election.
\newblock In {\em Proceedings of the 2019 ACM Symposium on Principles of
  Distributed Computing}, pages 53--59, 2019.

\bibitem{CC20}
Hsueh-Ping Chen and Ho-Lin Chen.
\newblock Self-stabilizing leader election in regular graphs.
\newblock In {\em Proceedings of the 39th Symposium on Principles of
  Distributed Computing}, pages 210--217, 2020.

\bibitem{chen2013programmable}
Yuan-Jyue Chen, Neil Dalchau, Niranjan Srinivas, Andrew Phillips, Luca
  Cardelli, David Soloveichik, and Georg Seelig.
\newblock Programmable chemical controllers made from dna.
\newblock {\em Nature nanotechnology}, 8(10):755--762, 2013.

\bibitem{DFGN18}
Michael Dinitz, Jeremy~T Fineman, Seth Gilbert, and Calvin Newport.
\newblock Smoothed analysis of dynamic networks.
\newblock {\em Distributed Computing}, 31(4):273--287, 2018.

\bibitem{DS18}
David Doty and David Soloveichik.
\newblock Stable leader election in population protocols requires linear time.
\newblock {\em Distributed Computing}, 31(4):257--271, 2018.

\bibitem{DraiefV12}
Moez Draief and Milan Vojnovic.
\newblock Convergence speed of binary interval consensus.
\newblock {\em {SIAM} J. Control. Optim.}, 50(3):1087--1109, 2012.

\bibitem{FJ06}
M.~J. Fischer and H.~Jiang.
\newblock Self-stabilizing leader election in networks of finite-state
  anonymous agents.
\newblock In {\em International Conference on Principles of Distributed
  Systems}, pages 395--409, 2006.

\bibitem{GS20}
Leszek G{\k{a}}sieniec and Grzegorz Stachowiak.
\newblock Enhanced phase clocks, population protocols, and fast space optimal
  leader election.
\newblock {\em Journal of the ACM (JACM)}, 68(1):1--21, 2020.

\bibitem{GSU19}
Leszek G{\k{a}}sieniec, Grzegorz Stachowiak, and Przemyslaw Uznanski.
\newblock Almost logarithmic-time space optimal leader election in population
  protocols.
\newblock In {\em The 31st ACM on Symposium on Parallelism in Algorithms and
  Architectures}, pages 93--102. ACM, 2019.

\bibitem{MeirPS20}
Uri Meir, Ami Paz, and Gregory Schwartzman.
\newblock Models of smoothing in dynamic networks.
\newblock In {\em {DISC}}, volume 179 of {\em LIPIcs}, pages 36:1--36:16.
  Schloss Dagstuhl - Leibniz-Zentrum f{\"{u}}r Informatik, 2020.

\bibitem{MST18}
Othon Michail, Paul~G Spirakis, and Michail Theofilatos.
\newblock Simple and fast approximate counting and leader election in
  populations.
\newblock In {\em Proceedings of the 20th International Symposium on
  Stabilizing, Safety, and Security of Distributed Systems}, pages 154--169,
  2018.

\bibitem{MollaS20}
Anisur~Rahaman Molla and Disha Shur.
\newblock Smoothed analysis of leader election in distributed networks.
\newblock In {\em {SSS}}, volume 12514 of {\em Lecture Notes in Computer
  Science}, pages 183--198. Springer, 2020.

\bibitem{PerronVV09}
Etienne Perron, Dinkar Vasudevan, and Milan Vojnovic.
\newblock Using three states for binary consensus on complete graphs.
\newblock In {\em {INFOCOM}}, pages 2527--2535. {IEEE}, 2009.

\bibitem{SSK+19}
Ryoya Sadano, Yuichi Sudo, Hirotsugu Kakugawa, and Toshimitsu Masuzawa.
\newblock A population protocol model with interaction probability considering
  speeds of agents.
\newblock In {\em 2019 IEEE 39th International Conference on Distributed
  Computing Systems (ICDCS)}, pages 2113--2122. IEEE, 2019.

\bibitem{SpielmanT04}
Daniel~A. Spielman and Shang{-}Hua Teng.
\newblock Smoothed analysis of algorithms: Why the simplex algorithm usually
  takes polynomial time.
\newblock {\em J. {ACM}}, 51(3):385--463, 2004.

\bibitem{SpielmanT09}
Daniel~A. Spielman and Shang{-}Hua Teng.
\newblock Smoothed analysis: an attempt to explain the behavior of algorithms
  in practice.
\newblock {\em Commun. {ACM}}, 52(10):76--84, 2009.

\bibitem{SEI+20}
Yuichi Sudo, Ryota Eguchi, Taisuke Izumi, and Toshimitsu Masuzawa.
\newblock Time-optimal loosely-stabilizing leader election in population
  protocols.
\newblock {\em arXiv preprint arXiv:2005.09944}, 2020.

\bibitem{SM20}
Yuichi Sudo and Toshimitsu Masuzawa.
\newblock Leader election requires logarithmic time in population protocols.
\newblock {\em Parallel Processing Letters}, 30(01):2050005, 2020.

\bibitem{SNY12}
Yuichi Sudo, Junya Nakamura, Yukiko Yamauchi, Fukuhito Ooshita, Hirotsugu.
  Kakugawa, and Toshimitsu Masuzawa.
\newblock Loosely-stabilizing leader election in a population protocol model.
\newblock {\em Theoretical Computer Science}, 444:100--112, 2012.

\bibitem{SudoOIKM20}
Yuichi Sudo, Fukuhito Ooshita, Taisuke Izumi, Hirotsugu Kakugawa, and
  Toshimitsu Masuzawa.
\newblock Time-optimal leader election in population protocols.
\newblock {\em {IEEE} Trans. Parallel Distributed Syst.}, 31(11):2620--2632,
  2020.

\bibitem{Sud+20}
Yuichi Sudo, Fukuhito Ooshita, Hirotsugu Kakugawa, Toshimitsu Masuzawa, Ajoy~K
  Datta, and Lawrence~L Larmore.
\newblock Loosely-stabilizing leader election with polylogarithmic convergence
  time.
\newblock {\em Theoretical Computer Science}, 806:617--631, 2020.

\bibitem{SSN+21}
Yuichi Sudo, Masahiro Shibata, Junya Nakamura, Yonghwan Kim, and Toshimitsu
  Masuzawa.
\newblock Self-stabilizing population protocols with global knowledge.
\newblock {\em IEEE Transactions on Parallel and Distributed Systems}, (Early
  Access):1--13, 2021.
\newblock \href {https://doi.org/10.1109/TPDS.2021.3076769}
  {\path{doi:10.1109/TPDS.2021.3076769}}.

\bibitem{XuBBKN20}
Chuan Xu, Joffroy Beauquier, Janna Burman, Shay Kutten, and Thomas Nowak.
\newblock Data collection in population protocols with non-uniformly random
  scheduler.
\newblock {\em Theor. Comput. Sci.}, 806:516--530, 2020.

\bibitem{YSM20}
Daisuke Yokota, Yuichi Sudo, and Toshimitsu Masuzawa.
\newblock Time-optimal self-stabilizing leader election on rings in population
  protocols.
\newblock In {\em International Symposium on Stabilizing, Safety, and Security
  of Distributed Systems}, pages 301--316. Springer, 2020.

\end{thebibliography}

\clearpage

\appendix
\section{Random initiator-responder order}
We show how to extend our proof for the case where the initiator-responder order is random, and no random coins are available. First we change our phase clock algorithm to work without coin flips. The pseudo-code is given as Algorithm~\ref{alg: phase clock random order}. It is essentially the same algorithm, but now the initiator increases its $second$ value, and the responder sets it to 0. Throughout the rest of the analysis we still refer to "coin flips" made by the agents, where we mean that an agent flips heads if it is an initiator, and tails otherwise. The most important difference to keep in mind is that, while the coin flips for each agent are independent of each other, coin flips between \emph{different agents} are no longer independent. 

\RestyleAlgo{boxruled}
\LinesNumbered
\begin{algorithm}[htbp]
	\DontPrintSemicolon
	\caption{Phase clock}
	\label{alg: phase clock random order}
	$M\gets\Theta(\log n), S\gets\log (n/p) + \log \log n + O(1)$\\
	$\forall v\in V, v.second \gets0, v.minute \gets 0, v.hour \gets 0$\\
	
	\ForEach{interaction $(u,v)$}
    {
        
        $u.second\gets u.second+1$\\
        $v.second \gets 0$\\
        \If{$u.second=S$}{
            $u.minute\gets u.minute+1$\\
            $u.second \gets 0$\\
        }
        \If{$u.minute=M$}{
            $u.hour\gets u.hour+1$\\
            $u.minute \gets 0$\\
        }
        //One-way epidemic \\
        \If{$u.hour < v.hour$}
        {
            $u.hour \gets v.hour$\\ 
            $u.minute \gets 0$\\
            $u.second \gets 0$\\
        }
        \If{$u.hour = v.hour$ and $u.minute < v.minute$}
        {
            $u.minute \gets v.minute$ \\
            $u.second \gets 0$\\
        }
    }
	
\end{algorithm}

We now restate the relevant Lemmas. First note that the lower bound on $L(i)$ still holds as we did not assume independence between coin flips made by different agents in the proof of Lemma~\ref{lem: hour lb} and Lemma~\ref{lem: ub min}. As for the upper bound we did assume independence in Lemma~\ref{lem:increase}, but not in Lemma~\ref{lem: hour ub}.
We state the following alternative for Lemma~\ref{lem:increase}:
\begin{lemma}
\label{lem:increase alt}
For Algorithm~\ref{alg: phase clock random order}, from any configuration where $m'=j < M$ holds, 
$m'$ increases to $j+1$ within $O(n p^{-2} \log^2 n)$ steps with a constant probability.
\end{lemma}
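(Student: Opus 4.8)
The plan is to follow the proof of Lemma~\ref{lem:increase} almost verbatim, changing only the single step that exploited independence of the win events \emph{across distinct agents}, and to absorb the resulting loss into the extra $\log n$ factor in the step bound. As before, I would first reduce to the synchronized configuration in which every agent has $\hour=i$ and $\minute=j$: the one-way epidemic propagates $\max_v v.\hour$ and $\max_{v:\,v.\hour=i}v.\minute$ to all agents within $O(p^{-1}n\log n)$ steps with high probability, which is of lower order than the target bound. In Algorithm~\ref{alg: phase clock random order} an agent's $\second$ is incremented exactly when it is the initiator of an interaction and reset to $0$ when it is the responder (one-way-epidemic resets being excluded by synchronization), and since the initiator/responder order of \emph{every} interaction — adversarial or random — is an independent fair coin, each agent still observes an i.i.d.\ fair sequence of ``heads'' (it is the initiator) and ``tails'' (it is the responder) along its own timeline. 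Only the coins of two \emph{different} agents are now dependent, and only through the single coin attached to each of their mutual interactions. With this in hand I would reuse the lottery-game language: a round of agent $v$'s game begins whenever $v$ sees a tail and is \emph{won} if $v$ sees $S$ consecutive heads before the next tail, so each round is won with probability $2^{-S}=\Theta(p/(n\log n))$.

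Set $T=\Theta(np^{-2}\log^2 n)$ and $\tau=\Theta(p^{-1}\log^2 n)$. Exactly as in Lemma~\ref{lem:increase}, but with these parameters, I would establish with probability $1-O(1/n)$ that within the first $T$ steps there are $\Omega(np^{-1}\log^2 n)$ random interactions and every agent participates in — hence plays at least — $\tau$ games; this uses only Chernoff bounds, since participation in a random interaction is independent across the random steps. It remains to lower-bound the probability that \emph{some} agent wins one of its first $\tau$ games. Writing $Z_v$ for the indicator that $v$ wins at least one of its first $\tau$ games and $Z=\sum_{v\in V}Z_v$, we have $E[Z]=n\bigl(1-(1-2^{-S})^{\tau}\bigr)=\Theta(\log n)$. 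I would bound $\Pr[Z\ge 1]\ge E[Z]^2/E[Z^2]$ and control $E[Z^2]=E[Z]+\sum_{v\neq w}E[Z_vZ_w]$ by noting that $Z_v$ and $Z_w$ are functions of the coins along the timelines of $v$ and of $w$, which are disjoint apart from the coins of the $v$--$w$ interactions; conditioning on those shared coins makes $Z_v$ and $Z_w$ independent, and a shared coin is a head for one agent and a tail for the other, so it can never simultaneously help both runs. Hence for pairs that never interact $E[Z_vZ_w]=E[Z_v]E[Z_w]$, while for the comparatively few interacting pairs $E[Z_vZ_w]=O(E[Z_v]E[Z_w])$, and the total excess is $o(E[Z]^2)$ because the expected number of $v$--$w$ interactions is only $O(\tau/n)$. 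Thus $E[Z^2]=(1+o(1))E[Z]^2+E[Z]$, and since $E[Z]=\Theta(\log n)\to\infty$ we get $\Pr[Z\ge 1]=\Omega(1)$. A union bound over the two events shows that with constant probability some agent wins a game — i.e.\ $m'$ reaches $j+1$ — within $T$ steps, and adding the lower-order $O(p^{-1}n\log n)$ synchronization cost completes the proof. Since Lemma~\ref{lem: hour ub} used Lemma~\ref{lem:increase} only as a black box (through Theorem~\ref{thm: modified chernoff}), it then carries over with $np^{-2}\log n$ replaced by $np^{-2}\log^2 n$, and the upper bound on $S(i)$ in this model follows.

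The main obstacle is precisely the covariance estimate above: in the coin model the win events of distinct agents were genuinely independent, whereas here they are coupled through the shared coin of every mutual interaction, so one cannot simply multiply the per-agent failure probabilities as in Lemma~\ref{lem:increase}. The cleanest way I see to make the covariance bound rigorous is to reveal the $v$--$w$ interaction coins first and argue that the conditional win probability of each of $v,w$ is no larger than its unconditional one (a shared coin being a ``head for one, tail for the other''), leaving the remaining randomness of the two timelines independent; the $O(\tau/n)=O(p^{-1}\log^2 n/n)$ bound on the expected number of such shared coins per pair is what keeps the accumulated correlation below $E[Z]^2$. Needing $E[Z]=\Theta(\log n)$ rather than $\Theta(1)$ in order to dominate this correlation is exactly what forces the step count up by a $\log n$ factor relative to Lemma~\ref{lem:increase}; a Janson-type inequality over the (sparse) dependency graph induced by shared interactions would yield the same conclusion.
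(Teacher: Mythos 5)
Your overall architecture (epidemic synchronization, per-agent i.i.d.\ initiator/responder coins, lottery games with win probability $2^{-S}$, and a $\log n$-factor inflation of the step budget) matches the paper, but you replace the paper's key device by a second-moment argument, and that is where there is a genuine gap. Your bound on $E[Z^2]$ rests on two claims: (i) that the expected number of shared $v$--$w$ interactions is $O(\tau/n)$, so only ``comparatively few'' pairs are correlated, and (ii) that conditioning on the shared coins of a pair makes $Z_v$ and $Z_w$ independent, with $E[Z_vZ_w]=O(E[Z_v]E[Z_w])$ for interacting pairs. Claim (i) is false in this model: only the $p$-fraction of random interactions is spread uniformly, while the adversary controls the remaining $(1-p)$-fraction and can make a single pair (or every pair) interact $\Theta(np^{-2}\log^2 n)$ times within the budget. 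Worse, the adversary is \emph{adaptive}: which interactions are $v$--$w$ interactions, and where they fall in each agent's timeline, is chosen as a function of previously observed coins, so ``the shared coins'' is not a fixed set you can condition on, and the clean product structure behind (ii) disappears. The remaining content of (ii) --- that win events of heavily interacting agents are at most constant-factor (or negatively) correlated because a shared coin is a head for one and a tail for the other --- is plausible but is precisely the technical crux, and it is asserted rather than proved; correlation of ``long head run within the first $\tau$ games'' events over adaptively interleaved, complementary coin sequences is not obviously nonpositive or even uniformly bounded. Note that if you \emph{could} prove a uniform per-pair bound $E[Z_vZ_w]\le C\,E[Z_v]E[Z_w]$ against the adaptive adversary, you would not even need the sparsity claim (i), since summing over all $n^2$ pairs still gives $E[Z^2]\le E[Z]+O(E[Z]^2)$; so the proposal stands or falls entirely on that unproven correlation estimate.

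The paper takes a different and more elementary route that avoids covariance estimates altogether: it inflates $\tau$ to $\Theta(S\,p^{-1}\log n)$ and observes that each game consists of at most $S$ coin flips of one agent, each shared with at most one game of one other agent, so every win event $W_i(v)$ depends on at most $S$ other win events; a greedy selection then extracts a subfamily $U$ of at least $n\tau/S$ mutually independent win events, and multiplying their failure probabilities gives $\Pr[\neg Y_\tau]\le(1-2^{-S})^{n\tau/S}\le e^{-\Omega(d)}$. The extra factor $S=\Theta(\log n)$ lost in this extraction is exactly the source of the $\log n$ slowdown, playing the role your $E[Z]=\Theta(\log n)$ was meant to play. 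To repair your argument you would either need to make the negative-correlation/conditioning step rigorous against an adaptive scheduler (e.g., via a coupling or a Janson-type bound whose dependency graph is controlled deterministically, not in expectation), or simply adopt the paper's independent-subfamily device, which uses only the deterministic fact that a game touches at most $S$ shared coins.
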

\begin{proof}

We maintain the same notations as the proof of Lemma~\ref{lem:increase}, with the exception that we choose $\tau = (S dp^{-1} \log n)/4$ (larger by an $S$ factor than originally). The proof remains unchanged until we need to bound $Pr[\neg X]$ and $Pr[\neg Y]$. Now there exist dependencies between the coin flips of different agents (but not the coin flips of a single agent).

The proof that $Pr[\neg X]$, remains unchanged. That is, we used a Chernoff bound to state that $Pr[\neg X(v)] < 1/n^2$. Now we have dependencies between coin flips of different agents, however the coin flips of a single agent are still independent. Finally, we note that:
$$Pr[\neg X] = Pr\left[\bigvee_{v\in V} \neg X(v)\right] \leq 1/n$$
Where the last transition is due to a union bound.

Next we bound $Pr[\neg Y]$. Again, we expand the expression:

\begin{align*}
  \Pr [\neg Y_\tau] = Pr\left[\bigwedge_{v \in V} \neg Y_\tau (v)\right] = Pr\left[\bigwedge_{v \in V} \bigwedge^\tau_{i=1}  \neg W_i(v)\right] \leq  Pr\left[\bigwedge_{(v,i) \in U}  \neg W_i(v)\right]
\end{align*}

Our goal is to bound $Pr\left[\bigwedge_{v \in V} \neg Y_\tau (v)\right]$, however there are dependencies between the events. However, it is sufficient if we can find a subset $U \subseteq V \times [\tau]$, such that the events $\set{W_i(v)}_{(v,i)\in U }$ are independent. Note that that every $W_i(v)$ can depend on at most $S$ other events. This is because every round can have length at most $S$ (at which point the round is won). Let us now construct a set $U$ corresponding to \emph{independent} events $\set{W_i(u)}_{(u,i)\in V}$. This can be constructed greedily, starting with $U=\emptyset, V'=V\times[\tau]$, we add some $(u,i)\in V'$ to $U$ and remove from $V'$ all $(v,j)$ such that event $W_i(u)$ depends on $W_j(v)$. We continue this construction until $V'=\emptyset$. As for every element added to $U$ at most $S$ elements were removed from $V'$, we get that $\size{U} \geq n\tau / S$. Now we can write:

\begin{align*}
  Pr\left[\bigwedge_{(v,i) \in U}  \neg W_i(v)\right] \leq (1- 2^{-S})^{n\tau / S} \leq e^{-\frac{ (n dp^{-1} \log n)/4 }{2^S}} = e^{-\frac{ (n dp^{-1} \log n)/4 }{2^c n p^{-1}\log n}}  =e^{-\Omega(d)} \le 1/3, 
\end{align*}

which completes the proof.

\end{proof}
We now state the main theorem for our phase clock.
\begin{theorem}
\label{thm: phase clock final alt}
Algorithm~\ref{alg: phase clock} is a $(np^{-1} \log^2 n,~np^{-2} \log^3 n)$-phase clock
that uses $O((\log n) \cdot (\log (p^{-1}n)))$ states
under $\Gamma_s$ when the initiator-responder order is random.
%
%
\end{theorem}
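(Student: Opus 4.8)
The plan is to re-run the analysis of Section~\ref{sec: phase clock} essentially verbatim for Algorithm~\ref{alg: phase clock random order}, checking at each step which arguments actually used independence between the coin flips of \emph{distinct} agents, and tracking where the extra logarithmic factor in the stretch enters. The only genuinely new ingredient is Lemma~\ref{lem:increase alt}, which has already been established above; the rest is bookkeeping.

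First I would handle the lower bound $L(i) = \Omega(np^{-1}\log^2 n)$. As observed in the appendix text, the proofs of Lemma~\ref{lem: ub min} and Lemma~\ref{lem: hour lb} invoke only, for a \emph{single} agent, the fact that $S$ consecutive ``heads'' occur with probability at most $2^{-S}$, together with union bounds over agents and over possible starting positions; no joint independence across agents is used. In Algorithm~\ref{alg: phase clock random order} the ``coin flip'' of an agent $v$ at an interaction is ``heads iff $v$ is the initiator,'' and since the initiator--responder order of every interaction (adversarial or random) is an independent fair bit, the flips of a fixed agent are i.i.d.\ fair coins. Hence Lemma~\ref{lem: ub min} and Lemma~\ref{lem: hour lb} hold unchanged, giving $S(i) \ge X \cdot cnp^{-1}\log n$ with $X = \Omega(M)$ w.h.p. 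Combined with the one-way-epidemic bound $R_s(i) - R_e(i-1) = O(np^{-1}\log n)$ w.h.p.\ (which needs only the random interactions, still present under $\Gamma_s$), the same subtraction as in Lemma~\ref{lem: hour lb} yields $L(i) = \Omega(np^{-1}\log^2 n)$ w.h.p.

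Next, the upper bound $S(i) = O(np^{-2}\log^3 n)$. Here I would use Lemma~\ref{lem:increase alt} in place of Lemma~\ref{lem:increase}: from any configuration with $m' = j < M$, the quantity $m'$ increases to $j+1$ within $O(np^{-2}\log^2 n)$ steps with constant probability. The per-agent part of that argument (every agent plays $\ge \tau$ lottery rounds within the window, via a Chernoff bound on the number of tails it sees) is untouched, since it concerns a single agent. The cross-agent dependencies only affect $\Pr[\neg Y_\tau]$, and they are tamed exactly as in the proof of Lemma~\ref{lem:increase alt}: each win event $W_i(v)$ is determined by at most $S$ of $v$'s coin flips, hence is correlated with at most $S$ other win events, so a greedy extraction produces a genuinely independent family of size at least $n\tau/S$; inflating $\tau$ by a factor $S = \Theta(\log(n/p))$ (which is $\Theta(\log n)$ in the stated regime $p \ge 1/\mathrm{poly}(n)$) restores the $e^{-\Omega(d)}$ bound. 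Feeding Lemma~\ref{lem:increase alt} into the martingale argument of Lemma~\ref{lem: hour ub} with $\tau = \Theta(np^{-2}\log^2 n)$ and $t = M$ works with no change, because that proof relies on Theorem~\ref{thm: modified chernoff} and never assumed independence; it gives $S(i) \le 2q^{-1}M \cdot \tau = O(np^{-2}\log^3 n)$ w.h.p.

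Finally I would assemble the two bounds: for every $i \ge 0$, w.h.p.\ $L(i) = \Omega(np^{-1}\log^2 n)$ and $S(i) = O(np^{-2}\log^3 n)$, which is precisely the definition of a $(np^{-1}\log^2 n,\ np^{-2}\log^3 n)$-phase clock, and the state count is $(S+1)(M+1) \cdot O(1) = O((\log n)\cdot(\log(p^{-1}n)))$ once $hour$ is capped at a constant. I expect the only real obstacle to be the one already resolved in Lemma~\ref{lem:increase alt}, namely controlling the correlated lottery wins; the bounded-dependency / greedy-independent-set trick costs exactly one logarithmic factor, which is why the stretch degrades from $np^{-2}\log^2 n$ (Theorem~\ref{thm: phase clock final}) to $np^{-2}\log^3 n$ here. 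The minor points to double-check are the within-agent independence of the order bits under the \emph{adaptive} adversary and that bounding the range of $hour$ (by a modulo) preserves the round-length guarantees, both of which go through as in the main text.
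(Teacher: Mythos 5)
Your proposal matches the paper's own argument essentially step for step: the lower-bound lemmas carry over because they never use cross-agent independence, the upper bound is obtained by substituting Lemma~\ref{lem:increase alt} (with its $S$-inflated $\tau$ and greedy extraction of independent win events) into the martingale argument of Lemma~\ref{lem: hour ub}, and the state count is unchanged. This is correct and is the same route the paper takes, including the observation that the extra $\log$ factor in the stretch comes precisely from the bounded-dependency trick.
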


Finally let us restate our leader election algorithm for the random initiator-responder order case (Algorithm~\ref{alg: leader election alt}). As before, we can still see this algorithm as flipping coins, but we lose the independence. A slight detail we must notice is that according to our original definition the tick is raised when the agent enters a new epoch. This is now problematic, as when an agent enters a new epoch it is always an initiator, and thus will always increase its level. To overcome this obstacle we can assume that tick is raised one interaction \emph{after} the $hour$ variable was increased. This now gives an equal probability for increasing and not increasing the level variable.
Our analysis presented in Section \ref{sec: leader election} does not require independence between coin flips and it goes through unchanged. Thus, we have the following theorem.

\RestyleAlgo{boxruled}
\LinesNumbered
\begin{algorithm}[htbp]
	\DontPrintSemicolon
	\caption{Leader election}
	\label{alg: leader election alt}
	$\lmax\gets\Theta(\log n)$\\
	$\forall v\in V, v.level\gets 0$\\
	
	\ForEach{interaction $(u,v)$}
    {
        //One way epidemic\\
        \If{$u.level < v.level$}
        {
            $u.leader \gets false$\\
            $u.level \gets v.level$\\
        }
        //$u.tick\gets true$ one interaction after $u$ enters a new epoch\\ 
        
        \If{$u.tick = true$ and $u.leader = true$}
        {
        $u.tick \gets false$\\
        $u.level\gets \min \set{u.level+1, \lmax}$\\
        }
        \If{$v.tick=true$}
        {
        $u.tick \gets false$\\
        }
        
        \If{$v.leader = true$ and $u.leader = true$}{$v.leader \gets false$}
        
    }

\end{algorithm}

\begin{theorem}
For every $p<1$, Algorithm~\ref{alg: leader election alt} solves leader election under $\Gamma_s$ in $O(np^{-2}\log^4 n)$ steps with high probability and in expectation, using $\Theta((\log^2 n) \cdot (\log (n/p)))$ states, when the initiator-responder order is random.
\end{theorem}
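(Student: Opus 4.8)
The plan is to replay the analysis of Section~\ref{sec: leader election} for Algorithm~\ref{alg: leader election}, now feeding in the phase clock of Theorem~\ref{thm: phase clock final alt} and carefully tracking the two places where the loss of independence between the ``coin flips'' of distinct agents could bite. The \emph{correctness} half is inherited unchanged: the argument that at least one leader always survives (the leader attaining the maximum $\level$ is never demoted by the epidemic rule, no rule ever turns a follower back into a leader, and part~2 only demotes a leader when it meets another leader) is purely combinatorial and uses nothing about the randomness. Since $\Gamma_s$ with $p>0$ performs random interactions infinitely often, any two surviving leaders eventually meet, so part~2 drives the count to one and the execution reaches $\mathcal{S}_P$ with probability $1$.

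For the running-time bound I would first establish the per-round ``good events''. By Theorem~\ref{thm: phase clock final alt} the phase clock satisfies $L(i)=\Omega(np^{-1}\log^2 n)$ and $S(i)=O(np^{-2}\log^3 n)$ w.h.p. During the first half of a round there are $\Omega(np^{-1}\log^2 n)$ steps, hence $\Omega(n\log^2 n)$ random interactions w.h.p.; a Chernoff bound on these (concentrating on the \emph{number} of interactions each agent takes part in, exactly the device used in Lemma~\ref{lem:increase alt}, so no independence of outcomes is needed) shows that every agent participates $\Omega(\log^2 n)$ times, so in particular every leader has its $\tick$ interaction, and thus performs its level coin flip, within the first half of the round. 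Moreover a one-way epidemic finishes within $O(np^{-1}\log n)\ll L(i)$ steps, so the maximum $\level$ is propagated to the whole population before the round ends. A union bound over the first $O(\log n)$ rounds, which together span $O(np^{-2}\log^4 n)$ steps because each stretch is $O(np^{-2}\log^3 n)$, keeps all of these events valid w.h.p. Note also that over $O(\log n)$ rounds the maximum $\level$ rises by at most $O(\log n)<\lmax$, so the counter does not saturate during this phase.

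Next comes the leader-count decay. Here the subtlety flagged in the description of Algorithm~\ref{alg: leader election alt} enters: the $\tick$ is raised one interaction \emph{after} the $\hour$ increment, so on that interaction the leader is the initiator (hence increments $\level$) with probability exactly $1/2$. Let $L_i$ be the number of leaders after round $i$. The $\set{X_i}$-style argument of Section~\ref{sec: leader election} uses only the bound $E[L_i\mid L_{i-1}]\le L_{i-1}(1/2+2^{-L_{i-1}})$, which I would re-verify without cross-agent independence: conditioned on the schedule of interaction pairs, the initiator bits are fresh independent coins, so the event ``no leader increments'' (every leader is the responder on its $\tick$ interaction) has probability at most $2^{-L_{i-1}}$ --- indeed $0$ if two leaders share a $\tick$ interaction, since they cannot both be responders, and exactly $2^{-L_{i-1}}$ otherwise. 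Hence $E[L_i\mid L_{i-1}]\le\tfrac34 L_{i-1}$ for $L_{i-1}\ge 2$, so $\Pr[L_i\ge\tfrac{33}{40}L_{i-1}]\le\tfrac{10}{11}$ by Markov; the indicator $X_i$ of $L_i<\tfrac{33}{40}L_{i-1}$ satisfies $E[X_i\mid X_1,\dots,X_{i-1}]\ge q$ for a constant $q$, and Theorem~\ref{thm: modified chernoff} with $t=\Theta(\log n)$ gives $L_i=1$ within $O(\log n)$ rounds w.h.p. A unique leader is then preserved forever (part~2 is inert and part~1 cannot demote the unique level-maximizer), giving the $O(np^{-2}\log^4 n)$ w.h.p. bound. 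For the expectation, the fast phase fails only with probability $O(n^{-\alpha})$, in which case the pairwise-elimination of part~2 finishes in $O(p^{-1}n^2)$ expected steps, so its contribution is $O(p^{-1}n^{2-\alpha})=O(1/p)$, dominated. The state count is the product of the $O((\log n)\log(n/p))$ phase-clock states with the $O(\log n)$ values of $\level$ (and $O(1)$ extra bits), i.e. $\Theta((\log^2 n)\log(n/p))$.

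The main obstacle is the bookkeeping around lost independence: showing the per-round events still hold w.h.p. (sidestepped by concentrating on interaction \emph{counts} rather than coin outcomes) and, more delicately, re-deriving the $E[L_i\mid L_{i-1}]$ inequality via the ``condition on the schedule, then the initiator bits are independent'' device together with the negative correlation produced when two leaders land on the same interaction. Everything else is a direct replay of Section~\ref{sec: leader election} with the weaker $np^{-2}\log^3 n$ stretch of Theorem~\ref{thm: phase clock final alt}, and it is precisely this weaker stretch that inflates the running time by the extra $\log n$ factor relative to Theorem~\ref{thm: leader election}.
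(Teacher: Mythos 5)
Your proposal is correct and follows essentially the same route as the paper: the paper's own argument is simply that the tick-delay makes the level increment a fair coin, that the Section~\ref{sec: leader election} analysis never used independence between different agents' coin flips, and that plugging in the weaker $O(np^{-2}\log^3 n)$ stretch of Theorem~\ref{thm: phase clock final alt} over $O(\log n)$ rounds yields $O(np^{-2}\log^4 n)$. Your additional bookkeeping (conditioning on the pair schedule so the initiator bits are fresh coins, and noting the harmless negative correlation when two leaders share a tick interaction) just makes explicit what the paper asserts goes through unchanged.
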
 

\end{document}